\journal{XXX}
\newtheorem{lemma}{Lemma}
\newtheorem{corollary}{Corollary}
\newtheorem{prop}{Proposition}
\theoremstyle{definition}
\newtheorem{definition}{Definition}
\newtcolorbox{algbox}[3][]{
  width=\textwidth,colframe=black,colback=white,
  sharp corners,
  before={\captionof{algorithm}{#2}\label{#3}},
  #1
}
\renewcommand{\ALG@name}{Procedure}
\begin{document}

\begin{frontmatter}

\title{The Evolving Moran Genealogy}


\author{Johannes Wirtz\corref{mycorrespondingauthor}}
\cortext[mycorrespondingauthor]{Corresponding author}
\ead{wirtzj0@uni-koeln.de}

\author{Thomas Wiehe\corref{}}
\ead{twiehe@uni-koeln.de}

\begin{abstract}
We study the evolution of the population genealogy in the classic neutral Moran Model of finite size $n\in\mathbb{N}$ and in discrete time. The stochastic transformations that shape a Moran population can be realized directly on its genealogy and give rise to a process on a state space consisting of $n$-sized binary increasing trees. We derive a number of properties of this process, and show that they are in agreement with existing results on the infinite-population limit of the Moran Model. Most importantly, this process admits time reversal, which makes it possible to simplify the mechanisms determining state changes, and allows for a thorough investigation of the Most Recent Common Ancestor process.
\end{abstract}

\begin{keyword}
Moran Model, Yule Model, Kingman Coalescent, Time Reversal of Markov Chains
\end{keyword}

\end{frontmatter}
\section{Introduction}
The Moran Model \cite{moran:model} is a fundamental population model of evolutionary biology. It has been used to study the evolution of a population of fixed size containing individuals of differing allelic types, subject to neutral or selective drift. In the large-population limit, its dual process describing population and sample genealogies is Kingman's Coalescent \cite{kingman:coalescent}. A very intuitive graphical representation of the infinite-population Moran Model is the Lookdown-Construction \cite{donnelly:lookdown,donnelly:lookdowna,donnelly:lookdownb}, which has also proven useful in the analysis of genealogical traits of a Moran population; for instance, it can be used to study the underlying process of \textit{Most Recent Common Ancestors} (\textit{MRCA}'s), i.e., the speed of evolution, or loss of information on the past \cite{pfaffelhuber:mrca}. A related approach to study the genealogy of a Moran population is to interpret the genealogy as a \textit{metric measure space}, which leads to a measure-valued Fleming-Viot process in the infinite-population limit \cite{depperschmidt:fleming}.\\
It is well known that genealogies of finite samples from such an infinite population can be represented in a discrete setting making use of the Yule process \cite{steel:yule}. This process generates random trees in the graph-theoretical sense and is often interpreted as a model of speciation. The distribution on trees of a given size that it induces, however, is equivalent to Kingman's Coalescent of finite size with respect to graph-theoretical and statistical properties of the trees it generates \cite{aldous:probability}. The aforementioned duality in turn establishes a connection between Moran and Yule processes.\\
In this work, we observe the evolution of the genealogy of a finite Moran population. We call this process the \textit{Evolving Moran Genealogy} (\textit{EMG}). It turns out that the state space of the \textit{EMG} can also be represented by the Yule process of finite iterations. In order to provide a detailed description of the \textit{EMG} as a Markov chain we discuss explicitly the Yule process and its associated the tree structures. We make use of this construction to observe genealogical properties of an evolving Moran population. This gives rise to the finite-population analogue of the limiting tree balance process described before \cite{pfaffelhuber:mrca,delmas:families}. This is related to the so-called root-jump process, also referred to as the \textit{MRCA}-process \cite{pfaffelhuber:mrca}. Again, here we study its 
finite-population counterpart.
A crucial feature of this discrete setting is the time-reversibility of the \textit{EMG}. The time-reversed process, denoted by $\textit{EMG}^\flat$, is algorithmically simpler than the \textit{EMG},
because it requires only grafting of branches, instead of the two independent processes of splitting and killing. Therefore, the analysis of genealogical properties over time, e.g. of the \textit{MRCA}- process, is simplified. 
Furthermore, the consideration of the reversed process offers new and interesting insights on the "age structure" and persistence time of tree nodes.

\section{Material}
\subsection{Trees Generated under the Yule Process}\label{subsec:yp}
Many variations of the original Yule Process, or Yule Model, as defined by G. U. Yule \cite{yule:process}, have been considered throughout the literature of mathematical population genetics. One very basic, discrete version of the process is described in Procedure~\ref{proc:yule} (see also \cite{steel:yule}):\\


\vbox{
\vspace{5pt}
\begin{algbox}{Discrete Yule Process}{proc:yule}
\begin{algorithmic}[1]
\State Start with a tree consisting of one single leaf node $\iota$.
\While{Tree has $k<n$ leaves}
\State Choose one leaf $\iota$ uniformly, label it by the current total number of leaves, turn it into an internal node $\nu$ with label $k$ and append two new leaves to it.
\EndWhile
\Ensure Tree with $n$ leaves
\end{algorithmic}
\end{algbox}
}
\vbox{
\begin{definition}
\vskip 0pt ~
\begin{itemize}
\item A tree $T$ generated according to Procedure~\ref{proc:yule} is called a (random) \textit{Yule tree}.
\item The \textit{size} $|T|$ of $T$ is given by the number $n$ of leaves.
\item Let $\iota = T^{(1)},\cdots,T^{(n)}=T$ denote the trees at intermediate iterations.
\end{itemize}
\end{definition}
}
In the context of \cite{ford:alpha}, Procedure~\ref{proc:yule} corresponds to the $\alpha$-model with $\alpha=0$. We ssume that appending of leafs is graphically carried out in downward direction and in such a way that $T$ is a plane graph. In particular, this means that in each iteration, one of the leaves is appended to the bottom left and one to the bottom right. 
This induces an orientation on the branch pairs appended below an internal node, and a horizontal ordering of the leaves of $T$, which allows us to denote them by $\iota_1,\dots,\iota_n$ from left to right. Similarly, identifying the index of an internal node with its label assigned by the Yule process, we may denote the internal nodes consecutively by $\nu_1,\dots,\nu_{n-1}$. If $n\geq 2$, the internal node $\nu_1$ is of (total) degree $2$ and is called \textit{root} of $T$, while all other internal nodes are of degree $3$. $T$ has exactly $2n-2$ edges (\textit{branches}), and for any leaf $\iota\in\{\iota_1,\dots,\iota_n\}$, when moving downward on the unique path from $\nu_1$ toward $\iota$, the sequence of integer labels of internal nodes on this path is increasing. Because of that, $T$ can be interpreted as an object of the class of \textit{binary increasing trees} \cite{donaghey:bij} that are additionally equipped with an orientation on the branches, i.e., ordered (e.g. \cite{flajolet:combinatorics}, pp. 143-144).
\vbox{
\begin{definition}
\vskip 0pt ~
\begin{itemize}
\item An \textit{ordered binary increasing tree} $T$ of size $n$ is a rooted binary tree with $n$ leaves and $n-1$ internal nodes carrying unique labels from the set $\{1,\dots,n-1\}$, such that the sequence of integers encountered on any path from the root to a leaf is increasing, and for any internal node, it holds that one of the two branches attached below (from the root) points to the left, and the other one to the right. 
\item Let $\mathcal{T}_n$ denote the set of all binary increasing trees of size $n$.
\end{itemize}
\end{definition}
}
The size of $\mathcal{T}_n$ equals $(n-1)!$, and it is easily deduced that for any $T\in\mathcal{T}_n$, there exists precisely one sequence of possible iterations of the Yule process such that the resulting Yule tree equals $T$. Furthermore, since leaves are always chosen uniformly in the Yule process, the probability of obtaining some $T\in\mathcal{T}_n$ under the Yule Process is $\Pr(T)=\frac{1}{(n-1)!}$, i.e., the Yule tree is uniformly distributed on $\mathcal{T}_n$.\\
\begin{figure}
\includegraphics[scale=.5]{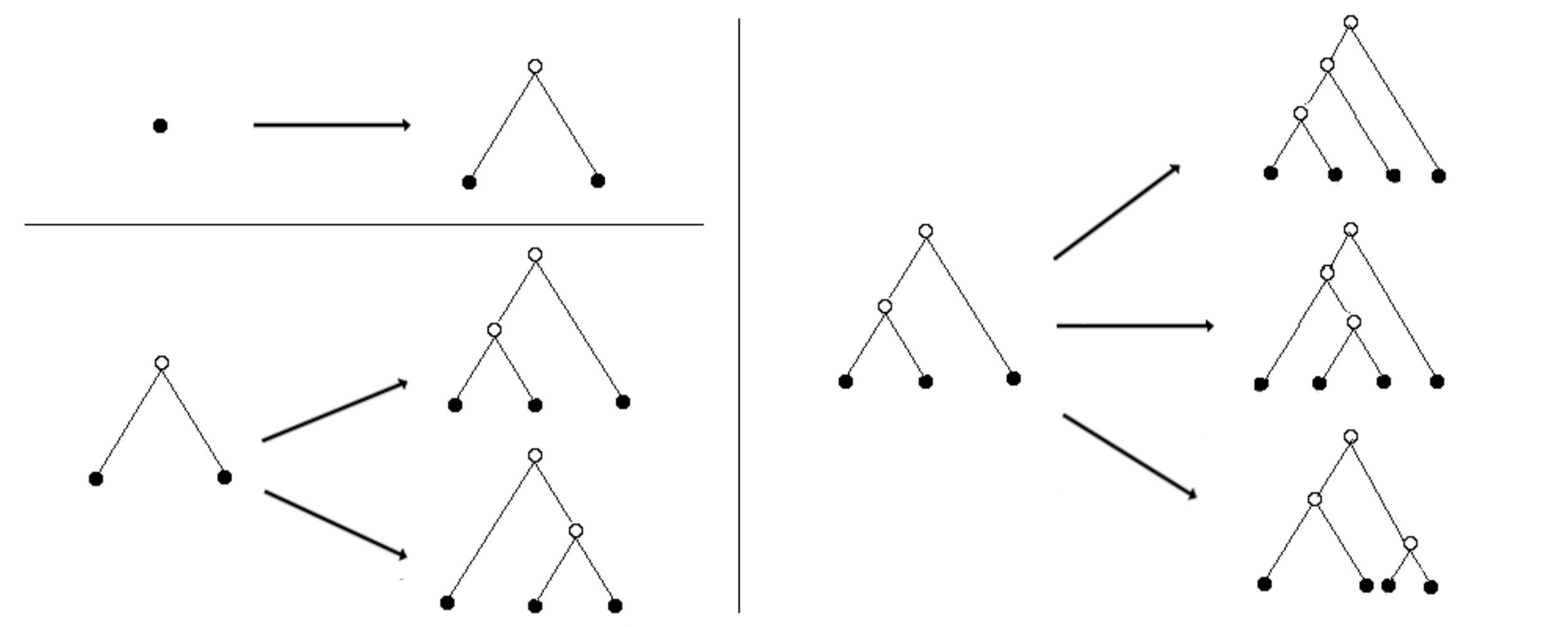}
\caption{Some possible iterations of the Yule tree-generating procedure}
\label{fig:yp}
\end{figure}
Suppose further that all $n$ leaves of $T$ are drawn on the same vertical "height" $0$, and all internal nodes $\nu_k$ on height $n-k$. Then, $T$ divides the plane into $n$ layers $l_1,\dots,l_n$, where layer $l_k,k=2,\dots,n-1$ is vertically restricted by the heights of $\nu_{k-1}$ and $\nu_{k}$. Layer $1$ extends upwards to infinity from the root's height, and layer $n$ from height $1$ to $0$. If $k\geq 2$, the $k$'th layer of $T$ is the layer which is crossed by precisely $k$ branches. This notion can be extend to layer $1$ by assuming that it contains an \textit{imaginary branch} extending from the root upwards. We may think of a branch $\beta$ as a composite of \textit{branch segments}, where a segment only extends over one layer. Then $T$ contains $1+2+\dots+n=\frac{n(n+1)}{2}$ such segments (counting the imaginary branch as a single segment). We denote them by $b_1,\dots,b_{\frac{n(n+1)}{2}}$ from top to bottom and left to right (see Figure \ref{fig:yt2}).\\
\begin{figure}
\includegraphics[scale=.5]{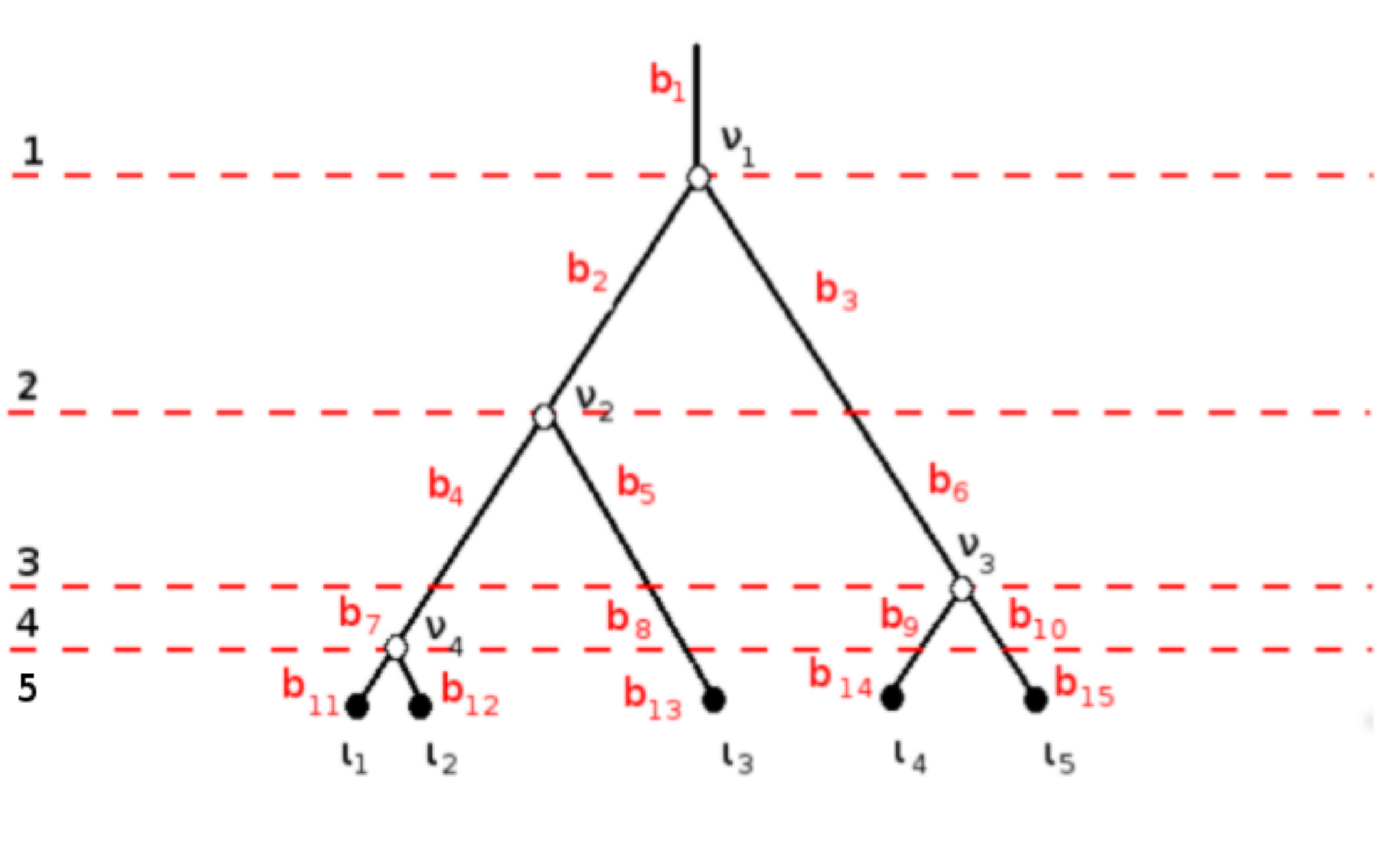}
\caption{A Yule tree of size $5$ with all layer, branch segment and node labellings depicted}
\label{fig:yt2}
\end{figure}
To simplify the following calculations, we define the following notation:\\
\vbox{
\begin{enumerate}
\item  We denote by $\sigma_T(i)$ the mapping from an integer $i,i=1,\dots,n-1$ to the leaf $\iota$ of $T^{(i)}$ chosen in the $i$'th iteration of the process generating $T$.
\item  We denote by $l(b)$ the layer across which a segment $b$ extends.
\end{enumerate}
}
Let $S$ denote a set of leaves of some $T\in\mathcal{T}_n$. Connecting all leaves of $S$ according to the branching pattern of $T$ generates another tree $T_{S}$ on $|S|$ leaves, where $|S|-1$ internal nodes of $T$ are preserved. If we label the internal nodes of $T_{S}$ by $1,\dots,|S|-1$ such that their relations with respect to height are preserved from $T$, $T_{S}\in\mathcal{T}_{|S|}$. Each leaf $\iota'$ in $T_S$ equals some leaf $\iota\in \{\iota_1,\dots,\iota_n\}$ of $T$, and the horizontal order of leaves in $T_S$ is in accordance with that in $T$. Similarly, each internal node $\nu'_k$ in $T_S$ is representative of some internal node $\nu_{l}$ in $T$, with $k\leq l$.\\
\vbox{
\begin{definition}
For $T\in\mathcal{T}_n$ and $\emptyset\neq S\subseteq\{\iota_1,\dots,\iota_n\}$:
\begin{itemize}
\item The object $T_{S}$ is called the ($S$-)\textit{induced subtree} of $T$.
\item For an internal node $\nu'\in\{\nu'_1,\dots,\nu'_{|S|-1}\}$ of $T_S$, let $\phi(\nu')$ denote the internal node of $T$ that is represented by $\nu'$ in $T_S$.
\item For all $j=1,\dots,|S|-1$, let $\tau(j)\in\{1,\dots,n-1\}$ denote the label of $\phi(\nu'_j)$ in $T$
\end{itemize}
\end{definition}
}
\begin{figure}
\includegraphics[scale=.625]{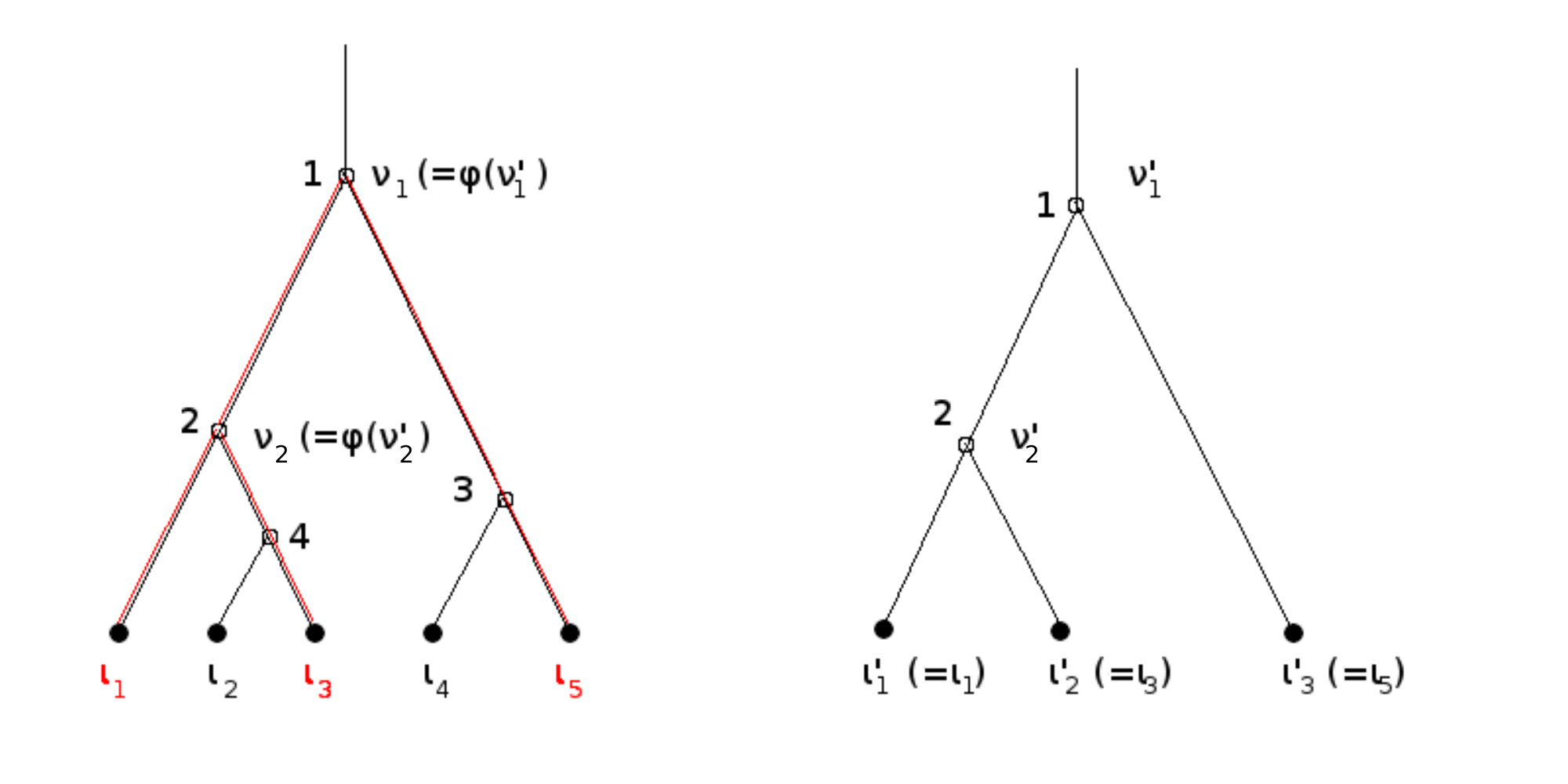}
\caption{A Yule tree of size 5 and the induced subtree of leaves $\iota_1,\iota_3,\iota_5$.}
\label{fig:induced}
\end{figure}

See Figure~\ref{fig:induced} for an example. If $S=\{\iota\}$ for some single leaf $\iota$ of $T$, $T_S$ equals the tree of size $1$ consisting just of $\iota$, and $T_{\{\iota_1,\dots,\iota_{n}\}}=T$.\\
In the following Lemma, we state that Yule trees exhibit a form of self-similarity with respect to induced subtrees.
\begin{lemma}[Sample-Subtree Invariance of Yule trees]\label{lemma:ssiyt}
Let $T$ denote a Yule tree of size $n$, and $S\subseteq \{\iota_1,\dots,\iota_n\}$, $|S|=k$, where the leaves $\iota\in S$ are chosen uniformly and without replacement. Then 
\begin{equation}
\forall \tilde{T}\in\mathcal{T}_k:\ \Pr(T_S=\tilde{T})=\frac{1}{(k-1)!}
\end{equation}
\end{lemma}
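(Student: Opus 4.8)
The plan is to reduce the assertion to the single special case $k=n-1$, i.e. to the statement that removing one uniformly chosen leaf from a uniform element of $\mathcal{T}_n$ produces a uniform element of $\mathcal{T}_{n-1}$, and then to iterate. The reduction rests on two observations. First, forming the induced subtree $T_S$ on a uniform $k$-subset $S$ is distributionally the same as deleting $n-k$ leaves one at a time, each chosen uniformly among those still present: a uniform $k$-subset is exactly the set left over after $n-k$ successive uniform removals. Second, the definition of $T_S$ coincides with iterated single-leaf deletion (remove the leaf, suppress the resulting degree-two node, and rename the surviving internal nodes by height order), since relabelling by height order is consistent under composition. Hence it suffices to prove the one-step claim and apply it $n-k$ times, each time to a freshly uniform tree.

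For the one-step claim I would argue by counting. Fix $T'\in\mathcal{T}_{n-1}$ and let $N(T')$ be the number of pairs $(T,\iota)$ with $T\in\mathcal{T}_n$ and $\iota$ a leaf of $T$ whose deletion yields $T'$. Since $T$ is uniform on $\mathcal{T}_n$ and $\iota$ is uniform among its $n$ leaves, every such pair has probability $1/n!$, so the claim is equivalent to $N(T')$ being independent of $T'$. I would compute $N(T')$ by reversing the deletion: each pair $(T,\iota)$ corresponds bijectively to an \emph{insertion} that subdivides some branch of $T'$ by a new internal node $p$ (the parent of $\iota$), with the old lower fragment and the new leaf $\iota$ as the two children of $p$; the limiting case where $p$ becomes a new root above the old one is handled by the imaginary branch. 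The increasing-label constraint forces $\mathrm{label}(p)$ to lie strictly between the labels of the two endpoints of the subdivided branch, and --- after renaming to $1,\dots,n-1$ --- the number of admissible values for $\mathrm{label}(p)$ on a given branch equals the number of layers that branch crosses, i.e. the number of branch segments composing it.

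This is the crux. Matching admissible labels with branch segments turns the count into a sum over all segments of $T'$: each segment yields exactly one insertion position, and attaching $\iota$ to the left or right of $p$ contributes a factor $2$. Using the segment count established above (a tree of size $m$ has $\tfrac{m(m+1)}{2}$ segments, the imaginary branch included), I obtain
\begin{equation}
N(T')=2\cdot\frac{(n-1)n}{2}=n(n-1),
\end{equation}
which is manifestly independent of the shape of $T'$. (A useful check: $\sum_{T'\in\mathcal{T}_{n-1}}N(T')=(n-2)!\,n(n-1)=n!$, the total number of pairs $(T,\iota)$.) Consequently $\Pr(\text{deletion}=T')=n(n-1)/n!=1/(n-2)!$, so the deleted tree is uniform on $\mathcal{T}_{n-1}$.

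The one-step claim then propagates: starting from a uniform $T\in\mathcal{T}_n$ and removing $n-k$ uniform leaves successively keeps the current tree uniform on $\mathcal{T}_{n-1},\mathcal{T}_{n-2},\dots,\mathcal{T}_{k}$ in turn, and the end product is precisely $T_S$. This yields $\Pr(T_S=\tilde T)=1/(k-1)!$ for every $\tilde T\in\mathcal{T}_k$. I expect the main obstacle to be establishing that $N(T')$ does not depend on $T'$: the segment identity makes this transparent, but it ultimately relies on the fact that every internal node of a binary increasing tree has exactly two children, so that the per-branch contributions collapse to a quantity determined by $n$ alone rather than by the branching pattern.
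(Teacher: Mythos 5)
Your proposal is correct, but it proves the lemma by a genuinely different route than the paper. The paper argues \emph{forward} in time by induction on $k$: it traces the Yule iterations that generated $T$, identifies the iteration at which the last internal node of $T_S$ is created, shows that the corresponding leaf is conditionally uniform on the ancestral set $S'$, and concludes that $T_S$ is itself the output of a $(k-1)$-step Yule process. You instead reduce to repeated single-leaf deletion and prove the one-step claim by a bijective count: pairs $(T,\iota)$ collapsing to a fixed $T'\in\mathcal{T}_{n-1}$ correspond to graftings $(b,\chi)$ on $T'$, and the admissible labels for the inserted node on a given branch match its branch segments, so $N(T')=2\cdot\tfrac{(n-1)n}{2}=n(n-1)$ independently of the shape of $T'$. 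This is essentially the content that the paper establishes only \emph{afterwards}, as Lemma~\ref{lemma:pwr} (Piecewise Recovery by Grafting) and its Corollary, and derives there \emph{from} Lemma~\ref{lemma:ssiyt}; you have inverted the logical order, proving the grafting count directly and deducing sample-subtree invariance from it. Your version is more elementary --- a pure enumeration with no conditioning on the history of the generating process --- and it makes the shape-independence transparent via the segment count; the paper's version buys a stronger structural statement, namely that $T_S$ is literally realizable as a Yule process run for $k-1$ iterations, which it then reuses. Two small points you should make explicit if you write this up: (i) the verification that the label-to-segment matching is exact (a branch from $\nu_a$ down to $\nu_c$ admits labels $a+1,\dots,c$ for the new node, which is precisely its number of segments, with the imaginary branch contributing the single label $1$); and (ii) the conditional statement needed for iteration, namely that given the current tree, the next deleted leaf is uniform on its leaves --- which holds because the deletions are performed sequentially and independently, so uniformity propagates step by step as you claim.
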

\begin{proof}
We show that we can treat $T_S$ as a tree generated by the Yule Process. Since this is obviously true for $|S|=1$ (or $S=2$), we apply induction on $k$.\\
Let $S=\{\iota'_1,\dots,\iota'_{k}\}$. Tracing back the iterations $l = n,\dots,\tau(|S|-1)$ of the process generating $T$, for each $\iota'_{j}\in S$ there is a unique leaf $\iota^{(l)}_{j}$ of $T^{(l)}$ such that either $\iota'_j=\iota^{(l)}_j$ or $\iota'_j$ is appended below $\iota^{(l)}_j$ by one or more Yule iterations. In $T^{(\tau(|S|-1)-1)}$, a leaf $\iota^{*}=\sigma_T(\tau(|S|-1))$ is turned into $\phi(\nu_{|S|-1})$ in iteration $\tau(|S|-1)$ and two of the leafs $\iota^{(\tau(|S|-1))}_m,\iota^{(\tau(|S|-1))}_{m+1}$ that are the correspondents of $\iota'_m,\iota'_{m+1}$ in $T^{(\tau(|S|-1))}$ are appended below.\\
Consider the set $S'=\{\iota^{\tau(|S|-1)}_{1},\dots,\iota^{\tau(|S|-1)}_{m-1},\iota^*,\iota^{\tau(|S|-1)}_{m+2},\dots,\iota^{\tau(|S|-1)}_k\}$. Because of the established correspondence of internal nodes between $T_S$ and $T^{(\tau(|S|-1)-1)}_{S'}$, $T_S$ is created out of $T^{(\tau(|S|-1)-1)}_{S'}$ by turning $\iota^*$ into an internal node and appending two new leaves. If $\iota^*$ is chosen uniformly from $S'$, then this simply corresponds to one Yule iteration. We verify this, writing $\Pr(\sigma_{T_S}(|S'|)=\iota^*)$ for the probability that $\iota =\iota^*$ for $\iota\in S'$:
\begin{align*}
\Pr(\sigma_{T_S}(|S'|)=\iota^*)&=\Pr\left(\sigma_T(\tau(|S|-1))=\iota^*|\sigma_T(\tau(|S|-1)) \in S'\right)\\
&=\frac{1/\tau(|S|-1)}{|S'|/\tau(|S|-1)}\\
&=\frac{1}{|S'|}
\end{align*}
In addition, the fact that $i^*$ is chosen uniformly from $S'$ implies that $S'$ can be treated as a set of size $k-1$ that is randomly chosen from the leaves of $T^{(\tau(|S|-1)-1)}$. By induction hypothesis, the induced subtree $T^{(\tau(|S|-1)-1)}_{S'}$ is then a Yule tree of size $k-1$, i.e. generated by $k-2$ iterations. Since the last step from $T^{(\tau(|S|-1)-1)}_{S'}$ to $T_S$ can be interpreted as a $k-1$'th iteration, we conclude that the process generating $T_S$ is a Yule Process of $|S|-1=k-1$ iterations.

\end{proof}
This property is similar to what is called \textit{Markovian self-similarity} in \cite{ford:alpha}. Another form of self-similarity that arises in the context of the Yule Process is \textit{Horton self-similarity}, which applies, for example, to the limit of Kingman's Coalescent \cite{kovchegov:similarity}.\\
Let again $T$ denote a Yule tree of size $n$. Instead of applying an iteration of the Yule process, $T$ can also be transformed into a Yule tree of size $n+1$ by \textit{random grafting} (Procedure~\ref{proc:rg}) a new branch leading to a leaf into $T$:\\
\vbox{
\vspace{5pt}
\begin{algbox}{Random Grafting Operation}{proc:rg}
\begin{algorithmic}[1]
\Require Yule tree $T$ of size $n$
\State Choose a branch segment $b$ uniformly from all $\frac{n(n+1)}{2}$ possible segments and an "orientation" $\chi\in \{\textit{left,right}\}$ uniformly \Comment{including the imaginary branch}
\State Split all branch segments $b',l(b')=l(b)$ into two separate branch segments \Comment{forming an additional layer}
\State Between the two pieces $b^{(1)},b^{(2)}$ resulting from splitting $b$, place a new internal node $\nu$ with label $l(b)$.
\State Increase the labels of all internal nodes in layers $k>l(b)$ by one;
\State At $\nu$, append a new branch $\beta$ consisting of $n-l(b)+1$ segments and ending in a new leaf $\iota$, to the left or right depending on $\chi$;
\State $\hat{T}\leftarrow T$

\Ensure Tree $\hat{T}$ with $n+1$ leaves
\end{algorithmic}

\end{algbox}
}
The orientation $\chi$ accounts for the fact that branches are implicitly oriented in the version of the Yule Process described above. Note that the position of the new leaf $\iota$ in $\hat{T}$ depends on $\chi$. A possible realization of Procedure~\ref{proc:rg} is depicted in Figure~\ref{fig:regraft}.\\
Applying Procedure~\ref{proc:rg}, we obtain an object $\hat{T}\in\mathcal{T}_{n+1}$. We write $T\uparrow\hat{T}$ if $\hat{T}$ was constructed from $T$ by random grafting. In total, there are $k(k+1)$ possibilities $(b,\chi)$ of performing a grafting in $T$ of equal probability, and unique with respect to which leaf and internal node of $\hat{T}$ they generate. However, different grafting operations on $T$ may generate the same object $\hat{T}$.

\begin{figure}
\includegraphics[scale=.5]{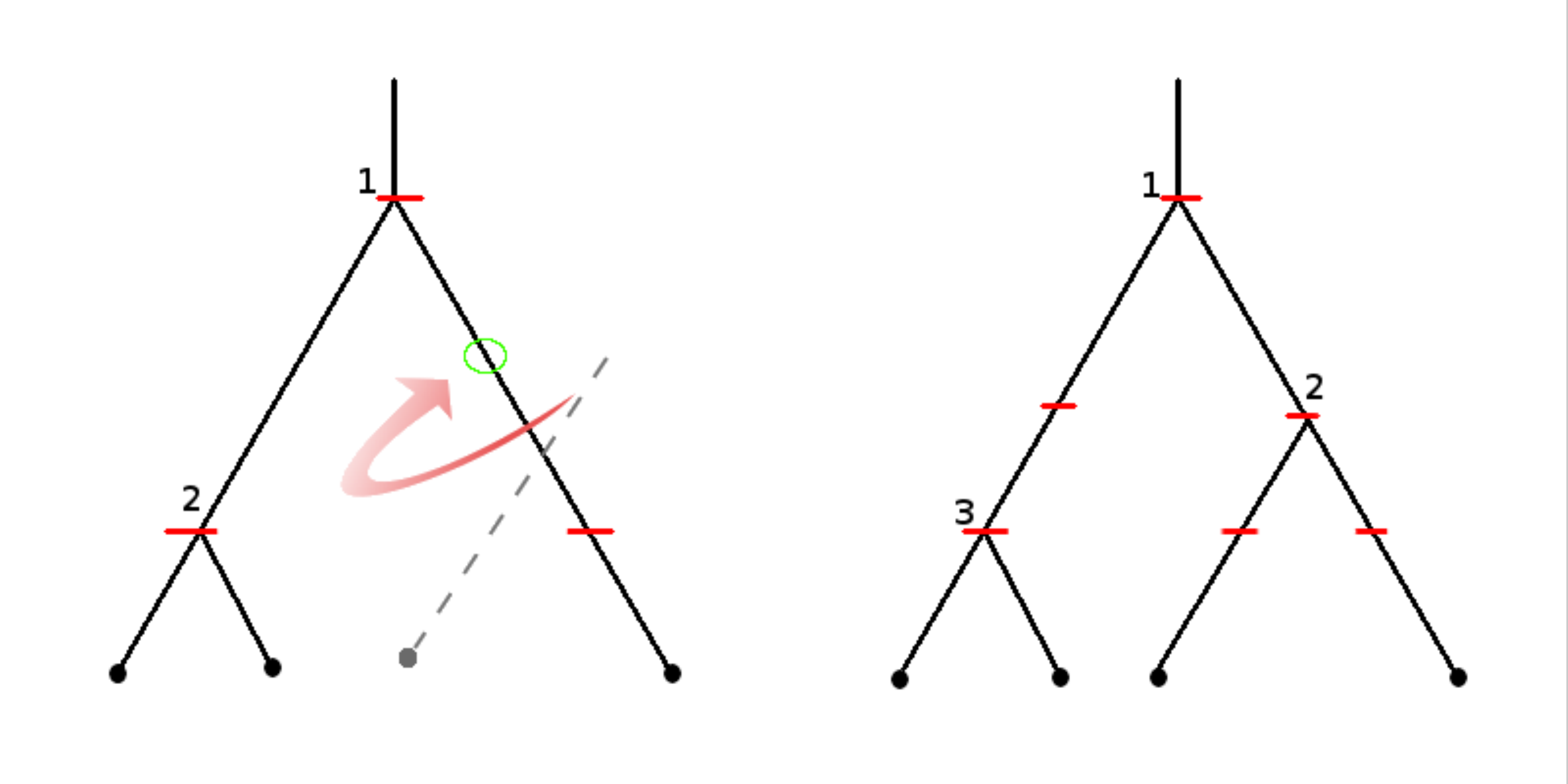}
\caption{The regrafting operation~\ref{proc:rg} performed on the branch segment with the "\textcolor{green}{o}" mark, transforming the $3$-sized tree on the left into a tree of size $4$.}
\label{fig:regraft}
\end{figure}
The relation between grafting operation and the original Yule Process is described by the following Lemma:
\begin{lemma}[Piecewise Recovery by Grafting]\label{lemma:pwr}
Let $T$ be a Yule tree of size $n$, $S=\{\iota_1',\dots,\iota_{k+1}'\}\subseteq \{\iota_1,\dots,\iota_n\}$ a set of leaves chosen uniformly without replacement, and $\iota'\in S$ chosen uniformly. Then
\begin{equation}
\forall T'\in\mathcal{T}_{k},T''\in\mathcal{T}_{k+1}:\ \Pr(T_S=T''|T_{S\setminus \iota'}=T')=\Pr(T'\uparrow T'')
\end{equation}
\end{lemma}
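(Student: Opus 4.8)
The plan is to compute the conditional probability directly by Bayes' rule, reducing everything to two applications of Lemma~\ref{lemma:ssiyt} together with a combinatorial bijection that identifies grafting with its inverse, leaf pruning. Writing $S'=S\setminus\{\iota'\}$, so that $|S'|=k$ and $|S|=k+1$, I note that $T_{S'}$ is the induced subtree of $T_S$ on the leaves of $S'$, i.e.\ it is obtained from $T_S$ by deleting the single leaf $\iota'$. Hence, once the identity of $\iota'$ inside $T_S$ is recorded, the event $\{T_{S'}=T'\}$ is determined by $T_S$, and I would start from
\begin{equation*}
\Pr(T_S = T'' \mid T_{S'} = T') = \frac{\Pr(T_S = T'')\,\Pr(T_{S'} = T' \mid T_S = T'')}{\Pr(T_{S'} = T')}.
\end{equation*}
By Lemma~\ref{lemma:ssiyt}, $T_S$ is uniform on $\mathcal{T}_{k+1}$ and $T_{S'}$ is uniform on $\mathcal{T}_k$, so the two unconditional factors are $\Pr(T_S=T'')=\tfrac{1}{k!}$ and $\Pr(T_{S'}=T')=\tfrac{1}{(k-1)!}$.

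The first auxiliary fact I need is that, conditionally on $T_S=T''$, the marked leaf $\iota'$ is uniformly distributed over the $k+1$ leaves of $T''$. This holds because $\iota'$ is drawn uniformly from $S$ by a selection independent of the tree $T$; its rank in the horizontal ordering of $S$ is therefore uniform on $\{1,\dots,k+1\}$ and independent of $T_S$, and since the induced-subtree map preserves horizontal order, this rank is exactly the position of $\iota'$ among the leaves of $T''$. Consequently $\Pr(T_{S'}=T'\mid T_S=T'')=N/(k+1)$, where $N$ is the number of leaves of $T''$ whose deletion, followed by the order-preserving relabelling of internal nodes used in forming an induced subtree, yields $T'$. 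Substituting the three quantities gives $\Pr(T_S=T''\mid T_{S'}=T')=N/\big(k(k+1)\big)$.

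It then remains to match this against $\Pr(T'\uparrow T'')=M/\big(k(k+1)\big)$, where $M$ counts the grafting operations $(b,\chi)$ on $T'$ that produce $T''$ (each of the $k(k+1)$ operations having probability $\tfrac{1}{k(k+1)}$); the identity $N=M$ is precisely what makes the two expressions coincide. I would establish $N=M$ by a bijection. Given a grafting operation $(b,\chi)$ producing $T''$, Procedure~\ref{proc:rg} creates one distinguished new leaf, and deleting exactly that leaf from $T''$ recovers $T'$, because the label insertion and increment in the grafting (a new node of label $l(b)$, with higher layers shifted up) are exactly undone by the relabelling convention in the definition of $T_S$. Conversely, each leaf of $T''$ whose removal gives $T'$ determines a unique attachment segment $b$ (the segment of $T'$ at which that leaf's lineage rejoins the rest) and a unique orientation $\chi$ (the side on which the leaf sits), hence a unique grafting operation.

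The main obstacle I anticipate is this last bijection: one must verify that grafting and leaf pruning are genuinely mutually inverse as maps on \emph{labelled, ordered} increasing trees, i.e.\ that the layer insertion and label increment in Procedure~\ref{proc:rg} correspond exactly to the order-preserving relabelling used when passing to an induced subtree, and that distinct grafting operations yielding the same $T''$ are separated by distinct new leaves (injectivity) while every admissible pruned leaf of $T''$ arises from some operation (surjectivity). Once this correspondence is checked the counts agree, $N=M$, and the two sides of the claimed identity coincide.
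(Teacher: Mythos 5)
Your proposal is correct and follows essentially the same route as the paper: the paper likewise reduces the conditional probability to $m/\bigl(k(k+1)\bigr)$ via Lemma~\ref{lemma:ssiyt} and the uniform choice of $\iota'$ (your $N/(k+1)$ factor is just the paper's joint probability $\Pr(T_S=T'',T_{S\setminus\iota'}=T')=\tfrac{m}{k!(k+1)}$ written as a conditional), and then establishes $m=l$ by the same grafting-versus-pruning bijection. The only cosmetic difference is your explicit Bayes'-rule factorization and your slightly more detailed justification that the marked leaf is uniform among the leaves of $T''$.
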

\begin{proof}
Let $l\in\mathbb{N}_0$ denote the number of graftings that can be performed on $T'$ to generate $T''$, thus $\Pr(T'\uparrow T'')=\frac{l}{k(k+1)}$. On the other hand,
\begin{align*}\Pr(T_S=T''|T_{S\setminus \iota'}=T')&=\frac{\Pr(T_S=T'',T_{S\setminus \iota'}=T')}{\Pr(T_{S\setminus \iota'}=T')}\\
\end{align*}
and by Lemma~\ref{lemma:ssiyt}, $\Pr(T_{S\setminus \iota'}=T')=1/(k-1)!$. Let $m\in\mathbb{N}_0$ denote the number of leafs $\iota'\in S$ such that $T_{S\setminus \iota'}=T'$. Since each tree $\tilde{T}\in\mathcal{T}_{k+1}$ is equally likely to be the induced subtree $T_S$ and $\iota'\in S$ is chosen uniformly, we have
$$\Pr(T_S=T'',T_{S\setminus \iota'}=T')=\frac{m}{k!(k+1)}$$
and thus $\Pr(T_S=T''|T_{S\setminus \iota'}=T')=\frac{m}{k(k+1)}$.\\ 
Let $\iota'\in S$ such that $T_{S\setminus \iota'}=T'$, and $\nu'$ the internal node $\iota'$ is appended to. There exists exactly one tuple $(b,\chi)$ such that, performing the associated grafting operation in $T'$, we obtain $T''$, the leaf generated by the operation occupies the position of $\iota'$ in $T''$, and the internal node generated by it carries the label of $\nu'$. Conversely, each tuple $(b,\chi)$ such that the associated grafting operation on $T'$ yields $T''$ generates a unique leaf $\iota ^*$ with respect to horizontal position and an internal node $\nu ^*$. Then, there exists a unique $\iota'\in S$ that occupies the position of $\iota ^*$ in $T_S$, and since $T''=T_S$, the induced subtree $T_{S\setminus \iota'}$ of $T_S$ equals $T'$. Therefore, $m=l$ holds, which ends the proof.
\end{proof}
We immediately conclude
\begin{corollary}
The distributions of the $n$-sized Yule tree and $T\in\mathcal{T}_n$ generated by successive random graftings are equal, therefore
$$\Pr(T|T\textnormal{ generated by random grafting})=\frac{1}{(n-1)!}$$
\end{corollary}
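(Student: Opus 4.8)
The plan is to prove the Corollary by induction on the tree size $n$, showing that the tree produced by successive random graftings is uniformly distributed on $\mathcal{T}_n$, which by the remark preceding Figure~\ref{fig:yp} is exactly the law of the $n$-sized Yule tree. Writing $G_k$ for the tree of size $k$ obtained after the graftings have been carried out up to size $k$, the base case $n\in\{1,2\}$ is immediate since $\mathcal{T}_1$ and $\mathcal{T}_2$ are singletons. For the inductive step I would first record the one-step transition of the grafting process: because each step is an application of Procedure~\ref{proc:rg}, the probability of passing from a size-$k$ tree $T'$ to a size-$(k+1)$ tree $T''$ is by definition $\Pr(T'\uparrow T'')$, so that
$$\Pr(G_{k+1}=T'') = \sum_{T'\in\mathcal{T}_k}\Pr(G_k=T')\,\Pr(T'\uparrow T'').$$

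The decisive observation is that the very same kernel $\Pr(\cdot\uparrow\cdot)$ governs the passage between Yule trees of consecutive sizes. To see this I would take a Yule tree $T$ of size $k+1$ and set $S$ equal to its full leaf set, so that $T_S=T$ is uniform on $\mathcal{T}_{k+1}$; letting $\iota'$ be a uniformly chosen leaf, Lemma~\ref{lemma:ssiyt} gives that $T_{S\setminus\iota'}$ is uniform on $\mathcal{T}_k$, while Lemma~\ref{lemma:pwr} identifies the conditional law $\Pr(T_S=T''\mid T_{S\setminus\iota'}=T')=\Pr(T'\uparrow T'')$. Decomposing $\Pr(T_S=T'')$ over the intermediate tree $T'$ then yields
$$\frac{1}{k!}=\Pr(T_S=T'')=\sum_{T'\in\mathcal{T}_k}\frac{1}{(k-1)!}\,\Pr(T'\uparrow T''),$$
which says precisely that the grafting kernel carries the uniform distribution on $\mathcal{T}_k$ to the uniform distribution on $\mathcal{T}_{k+1}$.

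To finish I would substitute the induction hypothesis $\Pr(G_k=T')=\frac{1}{(k-1)!}$ into the first display; by the identity just derived the right-hand side collapses to $\frac{1}{k!}$, so $\Pr(G_{k+1}=T'')=\frac{1}{k!}$ for every $T''\in\mathcal{T}_{k+1}$, closing the induction and giving the stated value $\frac{1}{(n-1)!}$. I do not expect a genuine obstacle, as the statement is essentially a repackaging of Lemmas~\ref{lemma:ssiyt} and~\ref{lemma:pwr}; the only point requiring care is the bookkeeping that the transition used to \emph{build} the tree is literally the same kernel $\Pr(\cdot\uparrow\cdot)$ that appears, via Lemma~\ref{lemma:pwr}, as the conditional law of a Yule induced subtree under the removal of a uniformly chosen leaf. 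Once this identification is in place, the family of uniform laws is seen to be propagated consistently from size $k$ to size $k+1$ by the grafting kernel, and the equality of the two distributions follows.
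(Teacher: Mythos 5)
Your proposal is correct and follows essentially the same route as the paper, which proves the corollary by induction on $n$ using Lemma~\ref{lemma:pwr}; you have simply written out the details (the one-step grafting kernel preserving uniformity) that the paper leaves implicit. No gaps.
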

\begin{proof}
This follows by induction on $n$, making use of Lemma~\ref{lemma:pwr}.
\end{proof}
\subsection{The Genealogical Process of a Moran Population}
The Moran process of finite size $n\in\mathbb{N}$ in discrete time is a Markov chain whose state space consists of ordered multisets $\mathcal{P}_i=\{x_1,\dots,x_n\}$ ("populations") of objects ("individuals"), where $i\in\mathbb{N}_0$ represents time, and $\mathcal{P}_0$ is some initial set. The transition between time steps is facilitated by the following operation, incorporating two random mechanics:\\
\vbox{
\vspace{5pt}
\begin{algbox}{Iteration of the Moran process}{proc:moran}

\begin{algorithmic}[1]
\Require Population $P$ of size $N$

\State Choose one element $x_k$ of $P$ uniformly and create a copy $x_k'=x_k$
\State Choose one element $x_l$ of $P$ uniformly

\State $P'\leftarrow \{x_1,\dots,x_{l-1},x_k',x_{l+1},\dots,x_N\}$
\Ensure New population $P'$ of size $N$

\end{algorithmic}

\end{algbox}
}\\
\vbox{
\begin{definition}
The neutral Moran process is denoted by $M=(\mathcal{P}_i)_{i\in\mathbb{N}}$, where $\mathcal{P}_{i+1}$ is obtained by the application of Procedure~\ref{proc:moran} on $\mathcal{P}_i$.
\end{definition}
}
One iteration of $M$ is often interpreted as one individual of $\mathcal{P}_i$ producing offspring, and one dying. Note that $k=l$ is not excluded, therefore, there are $n^2$ possible transitions of equal probability $\frac{1}{n^2}$. Several modifications of this process exist \cite{etheridge:popmodels}; for instance, the case of an initial population consisting of two different "types" of individuals $a,A$ is known as the \textit{two-allele} Moran process. Other versions allow mutations between types or let the probabilities of producing offspring and/or dying depend on the type of individuals to model natural selection \cite{kluth:moranselection}. Here, we only consider the \textit{neutral} Moran process with uniform transition probabilities and without mutation.\\
With probability $1$, there is a finite time $i^*$ at which $M$ will enter a state in which the population consists only of the copies (descendants) of some $x_k\in \mathcal{P}_0$, while all other $x_l\in \mathcal{P}_0,l\neq k$, and their copies have been removed from the population. The individual $x_k$, or one of its descendants, is thus the \textit{Most Recent Common Ancestor} (\textit{MRCA}) dating back to at most time $0$, and looking backwards in time, there exists a branching pattern describing how the current population of $x_k$-copies has been created, in the form of a binary tree $T$ with branch lengths given implicitly by the number of time steps between splits. Considering $n\rightarrow \infty$ and rescaling time by $2/n^2$, the (infinitely large) genealogy $T$ after the first time at which there exists a \textit{MRCA}) is a realization of Kingman's Coalescent \cite{wakeley:coaltheory} and all sample genealogies are (Kingman-) Coalescent trees of sample size $n'$, $n'\in\mathbb{N}$. In this work, we focus on genealogies of finite Moran processes.\\
We assume that copy and original are indistinguishable after a duplication ($M$ is memoryless), and the copy is placed to the side of the original (instead of replacing the killed individual), and other individuals are shifted to the left or right depending on whether $l<k$ or $l>k$. This is achieved by following Procedure~\ref{proc:pmoran} and produces a "disentangled" version of the Moran process. Importantly, $T$ can then be treated as a plane graph without having to re-order individuals (see Figure~\ref{fig:plane}). Although this may seem like a minor additional complication, it is in fact crucial for the following considerations, similarly to the case of the Lookdown-Construction \cite{donnelly:lookdown,donnelly:lookdowna,donnelly:lookdownb}, whose benefits also result from intelligent organization of duplications and removals in the underlying population model.\\

\vbox{
\vspace{5pt}
\begin{algbox}{Planar order maintenance in $M$}{proc:pmoran}
\begin{algorithmic}[1]
\Require Population $P$ of size $N$

\State Choose one element $x_k$ of $P$ uniformly and create a copy $x_k'=x_k$
\State Choose one element $x_l$ of $P$ uniformly
\If{$l<k$}
\State Lower the position of individuals $x_{l+1},\dots,x_{k-1}$ by one;
\State Assign the possible positions $k-1,k$ to individual $x_k$ and its copy with probability $1/2$; 
\State $P'\leftarrow \{x_1,\dots,x_{l-1},x_{l+1},\dots,x_{k-1},x_k\textnormal{v}x_{k}',x_k\textnormal{v}x_{k}',x_{k+1},\dots,x_n \}$
\ElsIf{$l>k$}
\State Increase the position of individuals $x_{k+1},\dots,x_{l-1}$ by one;
\State Assign the possible positions $k,k+1$ to individual $x_k$ and its copy with probability $1/2$; 
\State $P'\leftarrow \{x_1,\dots,x_k\textnormal{v}x_{k}',x_k\textnormal{v}x_{k}',x_{k+1},\dots,x_{l-1},x_{l+1},\dots,x_n \}$
\Else
\State Replace $x_k$ by $x_k'$;
\State $P'\leftarrow \{x_1,\dots,x_{k-1},x_{k}',x_{k+1},\dots,x_n \}$
\EndIf
\Ensure New population $P'$ of size $N$

\end{algorithmic}

\end{algbox}
}

\begin{figure}
\begin{center}
\includegraphics[scale=.325]{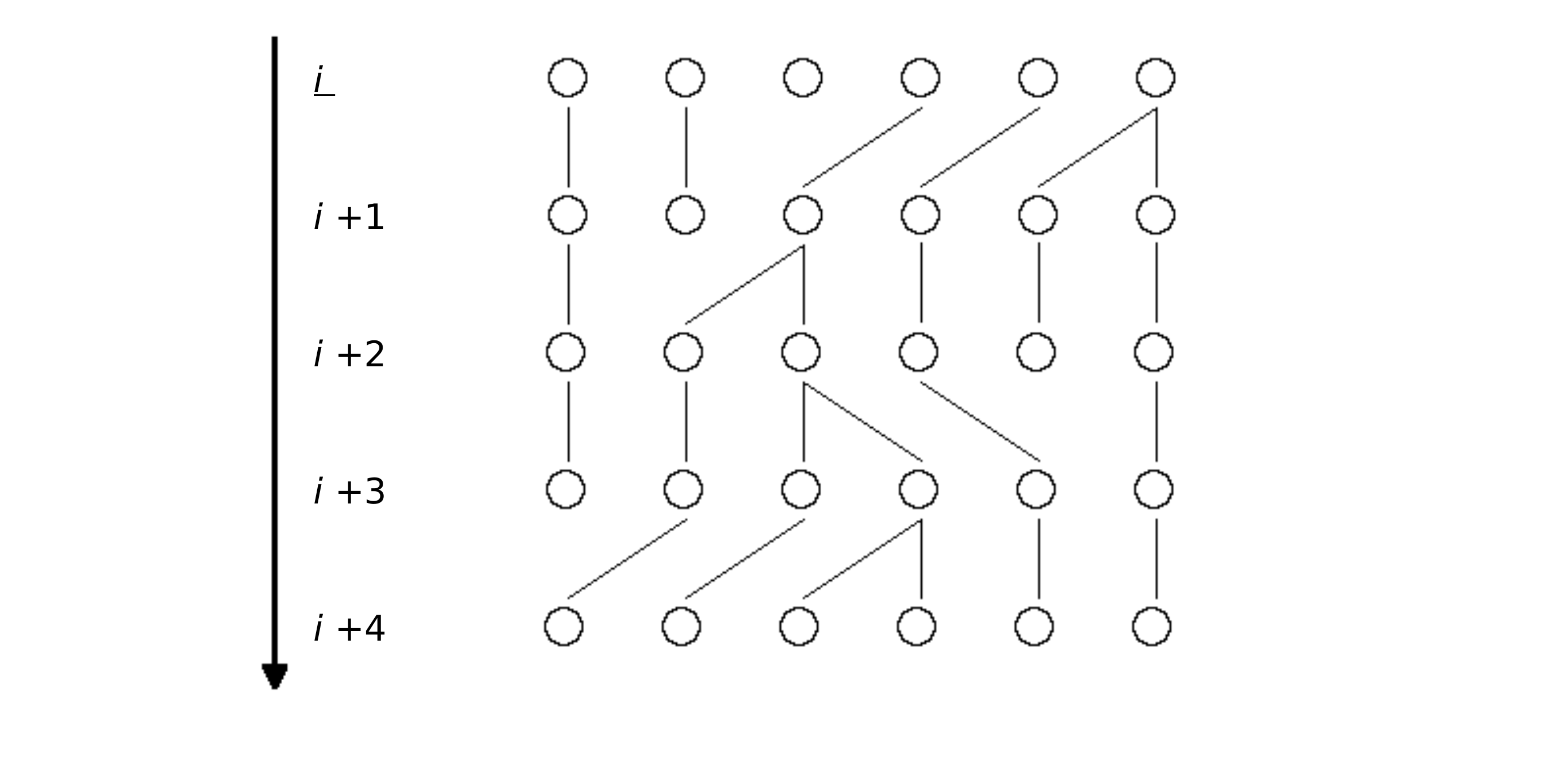}
\end{center}
\caption{Iterations of a Moran process, where the order is maintained according to~\ref{proc:pmoran}.}
\label{fig:plane}
\end{figure}

Since the time $i^*$ is almost surely finite, we assume in what follows that at $i=0$ there already exists a genealogy $T_0$. We  disregard the branch lengths of $T_0$ and instead assume that internal nodes of $T_0$ are labeled by integers $1,\dots,n-1$ respecting the order of the past split events which they represent, and obtain an $n$-sized Yule tree, where the leaves $\iota_1,\dots \iota_n$ represent the individuals $x_1,\dots,x_n$ of the population. By \cite{aldous:probability}, $T_0$ can be treated as the result of a Yule process of $n-1$ iterations. In the following iterations $i>0$, $M$ can be emulated by observing the genealogy $T_i,i\geq 0$ directly, where $T_i$ is modified according to Procedure~\ref{proc:emg}.

\vbox{
\vspace{5pt}
\begin{algbox}{Evolving Moran Genealogy given $T_i$}{proc:emg}
\begin{algorithmic}[1]
\Require Current tree $T_i\in\mathcal{T}_n$
\State Choose $k,l\in \{1,\dots,n\}$ uniformly and independently
\If{$k=l$}
\State $T_{i+1}\leftarrow T_i$
\Else
\State Remove the leaf representing the killed individual $x_l$ alongside the branch connecting it to the remainder of $T_i$, and the internal node $\nu_j$ at the position in $T_i$ that branch is attached at;
\State Merge the two branch segments $b,b'$ connected to $\nu_j$ into one;
\State Merge all pairs of branch segments $b,b'$ in layers $j,j+1$ belonging to the same branch into single branch segments;\\ \Comment{4-6 "remove" layer $j+1$}
\State Decrease the labels of $\nu_{j'},j'>j$ by one in $T_i$;
\State Turn the leaf representing the duplicated individual $x_k$ in $T_i$ into an internal node with two new leaves appended to it and label it by $n-1$; $T_{i+1}\rightarrow T_i$;
\EndIf
\Ensure New tree $T_{i+1}\in\mathcal{T}_n$
\end{algorithmic}

\end{algbox}
}

Let $\Phi_{k,l}(T_i)$ denote the output of Procedure~\ref{proc:emg} given $k,l$. Then, we define:\\
\vbox{
\begin{definition}
\vskip 0pt ~
\begin{itemize}
\item The process $(\mathcal{T}_i)_{i\in\mathbb{N}}$
with $T_{i+1}=\Phi_{k,l}(T_i)$ for uniform $k,l$ and uniform $T_0$ is called \textit{Evolving Moran Genealogy}, for short \textit{EMG}.
\item We will identify a leaf $\iota$ of any $T_i,i\geq 0$ with the individual $x$ representing it and write $x\in T_i$ if an individual $x$ is part of $T_i$.
\item For  $T,T'\in\mathcal{T}_n$, we define the notation
$$T \rightarrow T' \Leftrightarrow \exists k,l\in\{1,\dots,n\}: \Phi_{k,l}(T)=T'$$
\end{itemize}
\end{definition}
}
\begin{figure}
\includegraphics[scale=.25]{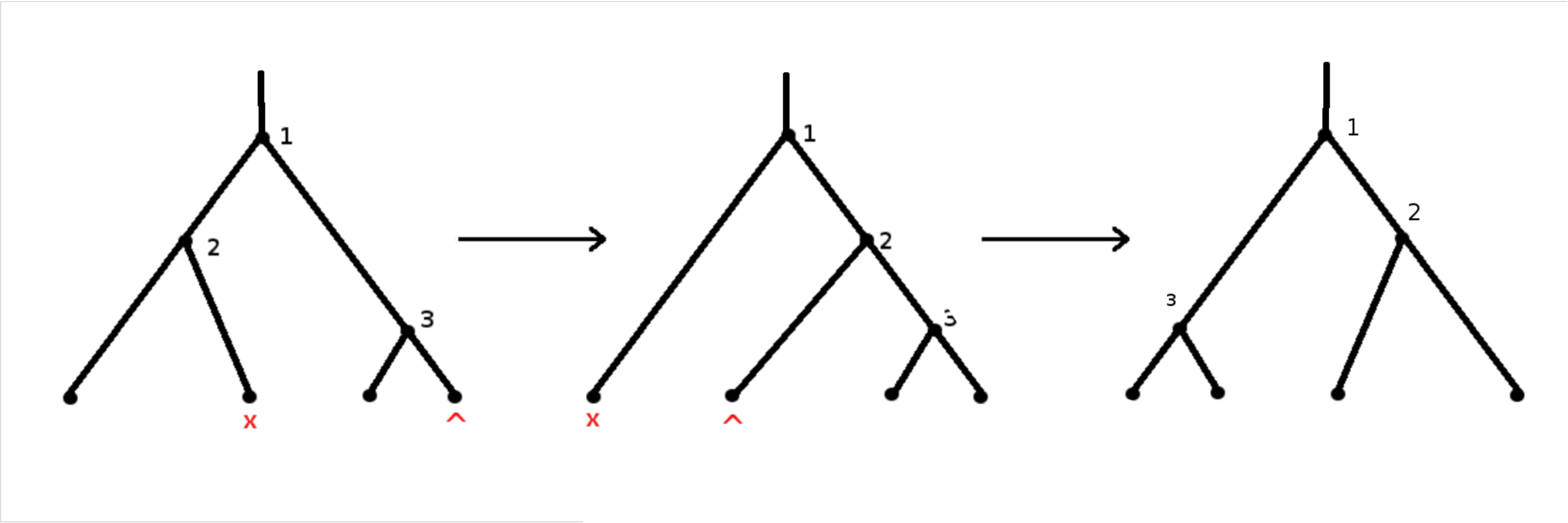}
\caption{Two steps in an \textit{EMG} of size $4$. Each step, one individual is killed ("\textcolor{red}{X}") and one duplicated ("\textcolor{red}{$\Lambda$}").}
\label{fig:emg2step}
\end{figure}
The \textit{EMG} (see Figure~\ref{fig:emg2step}) is a Markov chain on the set $\mathcal{T}_n$. Its transition matrix $E$ has nonzero diagonal entries, since $l=k$ entails $T_{i+1}=T_i$. Secondly, at most $n-1$ transition steps of the above form are needed to transform $T_i$ into some arbitrary $T'\in\mathcal{T}_n$, because all duplications may be applied to one single individual $x$ of the population $\mathcal{P}_i$ at time $i$ and its descendants, and all removals to the remaining individuals of $\mathcal{P}_i$, corresponding to a Yule process of $n-1$ iterations. As a result, the \textit{EMG} is a recurrent Markov chain. Aperiodicity of the \textit{EMG} also follows, as $T_{i+1}=T_i$ can always happen with positive probability.\\
As a consequence, there exists a stationary distribution $P^*$ of the \textit{EMG} on $\mathcal{T}_n$. Since we may interpret the genealogy $T_{i}$ as a result of a Yule process at each stage $i>0$, and since all $T\in\mathcal{T}_n$ are equally likely under the Yule process, it follows that $P^*$ is the uniform distribution, i.e. $P^*(T)=\frac{1}{(n-1)!}$ for all $T\in\mathcal{T}_n$.\\ 
The relation $T\rightarrow T$, indicating that $T$ can be transformed into $T'$ by some duplication/remove combination in $M$, can be used to describe the entries of $E$. Importantly,
\begin{equation}\notag
T=T'\Rightarrow T\rightarrow T';\ T\rightarrow T'\not\Rightarrow T' \rightarrow T
\end{equation}
Then the $T,T'$-entry of $E$ can be denoted in the following way:
\begin{equation}\notag
\Pr(T_{i+1}=T'|T_i=T)=\begin{cases}
0 & T\not\rightarrow T' \\
\frac{|\{(k,l)\in\{1,\dots,n\}^2:\Phi_{k,l}(T)=T'\}|}{n^2} & \textnormal{otherwise} \\
\end{cases}
\end{equation}
In particular, the diagonal entries of $E$ are nonzero and depend on $T$. Suppose $n=2^k$ for some $k\geq 2$, and consider the \textit{caterpillar} $C\in\mathcal{T}_n$ obtained under the Yule process by always choosing the leftmost individual to split, and any \textit{complete binary search tree} $B\in\mathcal{T}_n$, i.e. a tree characterized by the fact that there is an equal number of leaves on both subtrees below each internal node. Then $\Pr(T_{i+1}=C|T_i=C)=\frac{2n}{n^2}$, whereas $\Pr(T_{i+1}=B|T_i=B)=\frac{n+2}{n^2}$.
\section{Results}
\subsection{The Process of Tree Balance}
Since a tree $T\in\mathcal{T}_n$ obtained under the \textit{EMG} is plane and individuals ordered from left to right, we may consider the left and right subtrees $T^l,T^r$ below the root node $\nu_1$. Essentially, $T^l$ can be thought of as the induced subtree of all leaves on the left side below $\nu_1$ (the same holds for $T^r$). Suppose we are interested in the dynamics of the number of leafs on the left, i.e. $|T^l|$.\\
\vbox{
\begin{definition}
The process $\textit{TB}:=(|T_i^l|)_{i\in\mathbb{N}}$ is called Tree Balance Process of the Evolving Moran genealogy.
\end{definition}
}
The choice between observing left and right subtree size is arbitrary, since always $|T^r|=n-|T^l|$. Closely related to the process $\textit{TB}$ is the $\Omega_1$-statistic \cite{aldous:balance,feretti:srecevents} observed over time, where $\Omega_1(T_i):=\min(|T_i^l|,|T_i^r|)$, and one observes $(\Omega_1)_{i\in\mathbb{N}}$. There is little difference between $\textit{TB}$ and the process of $\Omega_1$, as paths of $\textit{TB}$ are essentially mirrored at $\frac{n}{2}$ when considering $\Omega_1$. Determining the dynamics of $\textit{TB}$ thus suffices to also obtain those of $\Omega_1$.\\
\begin{prop}
The transition probabilities of $\textit{TB}$ are as follows:\\
\hfill\\
If $2\leq |T_i^l|\leq n-2$,
\begin{equation}\notag
\Pr\left(|T_{i+1}^l|=\omega ~\Bigm| ~|T_i^l|\right)=
\begin{cases}
\frac{|T_i^l|(n-|T_i^l|)}{n^2} & \omega = |T_i^l|+1 \\
\frac{|T_i^l|^2+(n-|T_i^l|)^2}{n^2} & \omega = |T_i^l|\\
\frac{|T_i^l|(n-|T_i^l|)}{n^2} & \omega = |T_i^l|-1 \\

\end{cases}
\end{equation}
If $|T_i^l|=1$,
\begin{equation}\notag
\Pr\left(|T_{i+1}^l|=\omega\right)=
\begin{cases}
\frac{1}{n} & \omega = 2 \\
\frac{(n-1)^2+2}{n^2} & \omega = 1\\
\frac{1}{n^2} & \textnormal{otherwise} \\
\end{cases}
\end{equation} 
And if $|T_i^l|=n-1$,
\begin{equation}\notag
\Pr\left(|T_{i+1}^l|=\omega\right)=
\begin{cases}
\frac{1}{n} & \omega = n-2 \\
\frac{(n-1)^2+2}{n^2} & \omega = n-1\\
\frac{1}{n^2} & \textnormal{otherwise} \\
\end{cases}
\end{equation} 
\end{prop}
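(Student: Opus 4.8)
The plan is to split the analysis into the \emph{interior} regime $2\le |T_i^l|\le n-2$, where the step of $\textit{TB}$ depends only on the current value $|T_i^l|$, and the two \emph{boundary} states $|T_i^l|\in\{1,n-1\}$, where one type of move is sensitive to the shape of the tree and must be averaged out. Throughout I use that the \textit{EMG} is started from and remains in its uniform stationary law, so $T_i$ is uniform on $\mathcal{T}_n$ at every time $i$.

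For the interior case I would first prove a structural invariance: when both subtrees below $\nu_1$ carry at least two leaves, neither child of the root is a leaf, so the internal node deleted in Procedure~\ref{proc:emg} (the parent of the killed leaf) is never $\nu_1$. Hence the root and the left/right bipartition of the leaves survive the step, and $|T_{i+1}^l|$ differs from $L:=|T_i^l|$ by $+1$, $0$ or $-1$ according only to the sides on which the duplicated leaf $x_k$ and the killed leaf $x_l$ lie. I would then count the $n^2$ equally likely pairs $(k,l)$: duplicating on the left and killing on the right (automatically $k\ne l$) gives $L(n-L)$ pairs with $\omega=L+1$; the mirror move gives $L(n-L)$ pairs with $\omega=L-1$; and every remaining pair — both indices on the same side, including all $n$ diagonal pairs $k=l$ — leaves $L$ unchanged, numbering $L^2+(n-L)^2$. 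Dividing by $n^2$ yields the three interior probabilities, and the identity $2L(n-L)+L^2+(n-L)^2=n^2$ confirms they sum to one.

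The boundary state $|T_i^l|=1$ is the crux. Here the left side is the single leaf $\iota_1$, a direct child of $\nu_1$, and I would partition the pairs into four groups. The $n$ diagonal pairs and the $(n-1)(n-2)$ off-diagonal pairs duplicating and killing within the right subtree all keep $\omega=1$; the $n-1$ pairs $(k,l)=(1,l)$ with $l\ne 1$ duplicate $\iota_1$ into a cherry and give $\omega=2$; and the remaining $n-1$ pairs $(k,l)=(k,1)$ with $k\ne 1$ kill $\iota_1$ and therefore delete the root, so the top node of $T_i^r$ becomes the new root and $|T_{i+1}^l|$ equals $a:=|T^{rl}|$, the left-subtree size of $T_i^r$, increased by one precisely when the duplicated leaf lies in that left subtree. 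This last, root-removing group is genuinely shape-dependent, which is why $\textit{TB}$ is not Markovian at the boundary on the nose and the stated numbers are those of the stationary process. I would make this precise via the fact that $T_i$ is uniform on $\mathcal{T}_n$, so conditionally on $|T_i^l|=1$ the subtree $T_i^r$ is uniform on $\mathcal{T}_{n-1}$ and hence $a$ is uniform on $\{1,\dots,n-2\}$ — the statement that the root split of a Yule tree is uniform, which follows from Lemma~\ref{lemma:ssiyt} or from the direct count of exactly $(n-2)!$ trees per split value.

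Finally I would average the root-removing group over this conditional law. For a fixed shape with $|T^{rl}|=a$, exactly $a$ of its pairs yield $\omega=a+1$ and $n-1-a$ yield $\omega=a$; averaging these counts against the uniform law of $a$ telescopes, since for every $v\in\{1,\dots,n-1\}$ the contributions from $a=v-1$ and $a=v$ sum to $n-2$, so each $v$ receives exactly one pair in expectation. Adding this to the shape-independent groups gives $n+(n-1)(n-2)+1=(n-1)^2+2$ pairs for $\omega=1$, $(n-1)+1=n$ pairs for $\omega=2$, and a single pair for each $v\in\{3,\dots,n-1\}$, i.e.\ the probabilities $\tfrac{(n-1)^2+2}{n^2}$, $\tfrac{1}{n}$ and $\tfrac{1}{n^2}$; the state $|T_i^l|=n-1$ follows by the reflection $\omega\mapsto n-\omega$. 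The main obstacle is exactly this boundary bookkeeping: recognizing that killing the lone minority leaf reroots the tree into a shape-dependent outcome, and that the clean formula emerges only after averaging against the uniform conditional law of the surviving subtree, where the uniformity of the Yule root split weights every outcome equally.
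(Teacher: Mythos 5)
Your proof is correct, and its overall decomposition (interior counting of the $n^2$ pairs $(k,l)$ by side, plus a separate treatment of the boundary states where the root can be deleted) is the same as the paper's. The one genuine difference lies in how the boundary redistribution is justified: the paper simply asserts, with a citation to Aldous, that after a complete removal of $T_i^l$ the new value $|T_{i+1}^l|$ is uniform on $\{1,\dots,n-1\}$, and then multiplies $\frac{n-1}{n^2}\cdot\frac{1}{n-1}=\frac{1}{n^2}$. You instead derive this uniformity from first principles: conditionally on $|T_i^l|=1$ under the uniform stationary law, $T_i^r$ is uniform on $\mathcal{T}_{n-1}$, its root split $a$ is uniform on $\{1,\dots,n-2\}$ (by the $(m-2)!$-trees-per-split count), and the telescoping $(v-1)+(n-1-v)=n-2$ shows each target value receives exactly one pair in expectation. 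This buys a self-contained argument and, more importantly, makes explicit something the paper glosses over: the root-removing transitions are shape-dependent, so $\textit{TB}$ is not Markov at the boundary per se, and the stated numbers are the one-step conditional probabilities of the stationary chain. Your observation that the interior case requires no such averaging (because neither child of $\nu_1$ is a leaf when both sides have at least two leaves, so the root survives) is also a useful structural point that the paper leaves implicit. One small slip: the relevant "trees per split value" count for $T_i^r\in\mathcal{T}_{n-1}$ is $(n-3)!$ out of $(n-2)!$, not $(n-2)!$ — but this does not affect the conclusion that $a$ is uniform.
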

\begin{proof}
Suppose $2\leq |T_i^l|\leq n-2$. $|T_{i+1}^l|=|T_i^l|-1$ is the case if one individual on the left side is removed and one on the right is duplicated. This happens with probability $\frac{|T_i^l|(n-|T_i^l|)}{n^2}$. We obtain the same probability for the case $|T_{i+1}^l|=|T_i^l|+1$.\\ 
Finally, we have $|T_{i+1}^l|=|T_i^l|$ if removal and duplication take place on the same side. The probability of this is $\frac{|T_i^l|^2+(n-|T_i^l|)^2}{n^2}$.\\ 
The only difference in the cases $|T_i^l|=1,n-1$ is that one has to include the possibility of a complete removal of $T_i^l$ in the first and $T_i^r$ in the latter case. If this happens, $|T_{i+1}^l|$ and $|T_{i+1}^r|$ are independent of $|T_i^l|$ and $|T_i^r|$. In fact, $|T_{i+1}^l|$ then assumes any value $1,\dots,n-1$ with uniform probability \cite{aldous:balance}.\\
Therefore, considering $|T_i^l|=1$, the total probability $\Pr(|T_{i+1}^l|=2)$ is the sum of the probability that the individual on the left is duplicated and one on the right is removed, which amounts to $\frac{(n-1)}{n^2}$, and the probability that it is removed and $|T_{i+1}^l|=2$ by chance, which happens with probability $\frac{n-1}{n^2}\frac{1}{n-1}=\frac{1}{n^2}$. This sum equals $\frac{1}{n}$.\\ 
For $\Pr(|T_{i+1}^l|=1)$ and $\Pr(|T_{i+1}^l|>2)$ the calculation is similar, and of course the case $|T_i^l|=n-1$ can be treated analogously.
\end{proof}
We notice that the transition probabilities within an episode are identical to those in a two-allele Moran Model. In the large-population limit, tree balance has been identified before as a Wright-Fisher Diffusion \cite{pfaffelhuber:mrca,delmas:families}, to which $(|T_i^l|/n)_{i\in\mathbb{N}}$ (the values of \textit{TB} divided by population size) indeed converges if time is rescaled by $2/n^2$. The transition probabilities of \textit{TB} in the cases $|T_i^l|=1$ and $|T_i^l|=n-1$ correspond to the behaviour of the tree balance process in the limit, which, upon hitting a "boundary" ($0$ or $1$), resets at any value in $[0,1]$ with uniform probability.\\ 
Also, if $n$ is large and $\frac{|T_i^l|}{n}$ is either close to $0$ or $1$ (the genealogy is "unbalanced"), the strength of diffusion is weakest. Consequently, if the Evolving Moran Genealogy enters an unbalanced state, genealogies in the following generations are expected to be unbalanced as well.It might be worth to investigate whether and to what extent this can have diluting effects on statistics like Tajima's $D$ \cite{tajima:d}, or $T_3$ \cite{wieheli:t3}, which crucially depend on tree shape; particularly because unbalanced tree shapes can introduce a bias in the frequency spectrum of polymorphisms that may be confused with the effect of of natural selection.\\
We may refer to the phases between complete removals of $T_i^l$ or $T_i^r$ as \textit{episodes} of the process \textit{TB}. It is worth mentioning that the complete removals of $T_i^l$ or $T_i^r$, i.e. starting and ending times of episodes in $\textit{TB}$, are precisely the times of \textit{MRCA} jumps in the \textit{EMG}, to be discussed in section~\ref{sec:mrca}.
\subsection{Time Reversal of the \textit{EMG}}
We consider a second process on the set $\mathcal{T}_n$. Let $T\in\mathcal{T}_n$ and consider the \textit{Merge-Regraft} operation:\\
\vbox{
\vspace{5pt}
\begin{algbox}{Merge-Regraft on given $T$}{proc:mrg}
\begin{algorithmic}[1]
\Require $T\in\mathcal{T}_n$
\State Choose one branch segment $b$ of $T$ from the set $\{b_1,\dots,b_{\frac{n(n+1)}{2}}\}$ with probability
\begin{equation}\notag
\Pr(b=b_k)=\begin{cases}\frac{1}{n^2} & b_k\textnormal{ ends in a leaf}\\
\frac{2}{n^2} & \textnormal{ otherwise}
\end{cases}
\end{equation} 

and $\chi$ from $\{\textit{left,right}\}$ with equal probability;
\If{$b$ ends in a leaf}
\State $T'\leftarrow T$
\Else
\State Remove the $n$-th layer of $T$; remove $\nu_{n-1}$; place leafs at the tips of the branch segments that extend across layer $n-1$; \\ \Comment{the position of $\nu_{n-1}$ is then occupied by some leaf}
\State Regraft a new leaf at branch $b$ with orientation $\chi$ in $T$ according to Procedure~\ref{proc:rg} (skipping step 1);
\State $T'\leftarrow T$
\EndIf
\Ensure $T'\in\mathcal{T}_n$
\end{algorithmic}

\end{algbox}
}
Graphically, this operation adds a new leaf by performing a regrafting operation at segment $b$, and merges the two leaves belonging to the lowermost split into one. Note that if $b$ ends in a leaf ($l(b)=n$), regrafting at this branch establishes the lowermost split, in which case the tree remains the same. To clarify the purpose of step $5$, one may also imagine that all leafs are moved up by one layer, such that two of them must "merge". An example of such an operation is shown in Figure~\ref{fig:mr}.\\
\begin{figure}
\begin{center}
\includegraphics[scale=.375]{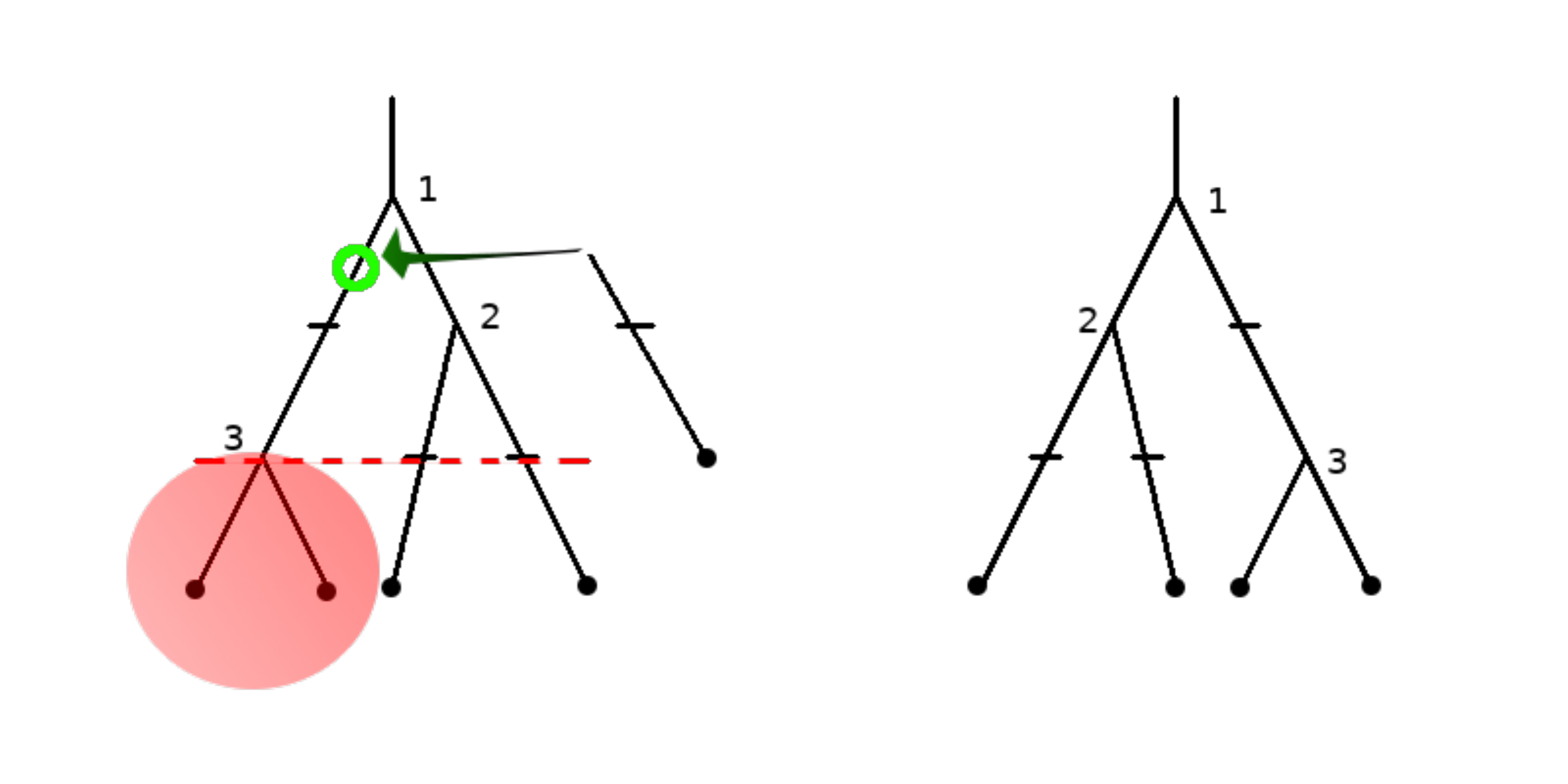}
\end{center}
\caption{Visualization of a possible Merge-Regraft operation on the tree on the left. The pair of leaves belonging to the lowermost split merge (red area), the lowermost layer is removed, and a branch ending in a single "revived" leaf is inserted at the branch segment carrying the "\textcolor{green}{o}" mark. The resulting tree is shown on the right; note that both trees belong to $\mathcal{T}_4$.}
\label{fig:mr}
\end{figure} 
Let the result of this operation be denoted by $\Phi'_{b,\chi}(T)$. $\Phi'_{b,\chi}(T)$ is itself an object of the set $\mathcal{T}_n$. The function $\Phi'$ can be thought of as a combinatorial inversion of $\Phi$: The split event facilitated by $\Phi$ is revoked by $\Phi'$, and the leaf that is removed under $\Phi$ can be recovered ("revived") by $\Phi'$ by regrafting; in fact, we have $T\rightarrow T'\Leftrightarrow \exists (b,\chi): \Phi'_{b,\chi}(T')=T$.\\
We consider the process 
$$R:=(\tilde{T}_{i})_{i\in\mathbb{N}},$$ 
where $\tilde{T}_0$ is uniformly chosen and, given $\tilde{T}_i$, $\tilde{T}_{i+1}$ is generated by the mechanism described above, i.e. $\tilde{T}_{i+1}=\Phi'_{b,\chi}(\tilde{T}_i)$ for some random choice of $b$ and $\chi$ (See Figure~\ref{fig:emgb}). In what follows, we will show that this process represents a \textit{time-reversal} of the $EMG$ (where the term time-reversal is used in th	e sense of e.g. \cite{lovasz:reversal}).\\
\begin{lemma}
For all $T,T'\in\mathcal{T}_n$:
\begin{equation}
\textnormal{Pr}_{\textit{EMG}}(T_{j+1}=T'|T_{j}=T)=\textnormal{Pr}_{R}(\tilde{T}_{i+1}=T|\tilde{T}_{i}=T')
\label{eq:transprobabilities}
\end{equation}
with $\Pr_{\textit{EMG}}$ denoting the transition probability of the \textit{EMG}-process, and $\Pr_{R}$ that of the process $R$.
\end{lemma}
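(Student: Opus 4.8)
The plan is to prove the stated identity directly, as an equality of the two transition matrices, rather than by invoking any general reversal theorem; the reversal interpretation then follows for free, since the earlier part of the paper shows the stationary distribution $P^*$ of the \textit{EMG} to be uniform on $\mathcal{T}_n$, so that $P^*(T)=P^*(T')$ and equation~\eqref{eq:transprobabilities} becomes exactly the detailed-balance-type relation identifying $R$ as the time-reversal. For the identity itself I would first write both sides as a count over $n^2$. The left-hand side equals $|\{(k,l):\Phi_{k,l}(T)=T'\}|/n^2$ by the transition-matrix formula derived above. For the right-hand side I would read off from Procedure~\ref{proc:mrg} that each move carries weight $1/n^2$: a non-leaf segment has $\Pr(b)=2/n^2$ and combines with one of two equally likely orientations, giving $\tfrac12\cdot\tfrac2{n^2}=\tfrac1{n^2}$, whereas a leaf segment has $\Pr(b)=1/n^2$ and returns the same tree for both orientations, so its two orientations together again contribute $1/n^2$. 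Hence $\Pr_R(\tilde T_{i+1}=T\mid\tilde T_i=T')$ is also a count over $n^2$, and it suffices to exhibit a bijection between the forward moves $(k,l)$ carrying $T$ to $T'$ and the reverse moves $(b,\chi)$ carrying $T'$ to $T$.

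The key structural observation is that both $\Phi$ and $\Phi'$ factor through a common intermediate tree in $\mathcal{T}_{n-1}$. On the reverse side, if $b$ is a non-leaf segment then $\Phi'_{b,\chi}(T')$ first deletes the lowermost split $\nu_{n-1}$ of $T'$, producing the tree $T^-:=T'\ominus\nu_{n-1}\in\mathcal{T}_{n-1}$, and then grafts a leaf onto $T^-$ at $(b,\chi)$ in the sense of Procedure~\ref{proc:rg}, the non-leaf segments of $T'$ (those in layers $1,\dots,n-1$) being in natural correspondence with the $\binom{n}{2}$ segments of $T^-$. On the forward side, $\Phi_{k,l}(T)$ first deletes leaf $\iota_l$, producing the induced subtree $T_{-l}:=T_{\{\iota_1,\dots,\iota_n\}\setminus\{\iota_l\}}\in\mathcal{T}_{n-1}$, and then splits the leaf representing $x_k$ into the new node $\nu_{n-1}$. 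Since deleting the freshly created node inverts the split, $\Phi_{k,l}(T)=T'$ forces $T_{-l}=T'\ominus\nu_{n-1}=T^-$; thus the intermediate tree is the same $T^-$ for every move realizing $T\to T'$, and it is determined by $T'$ alone.

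With this in hand I would build the bijection through the grafting--removal duality. Grafting a leaf onto $T^-$ is inverse to deleting that leaf, and by the uniqueness clause of Procedure~\ref{proc:rg} each grafting of $T^-$ creates a unique new (leaf, internal node) pair; consequently the graftings of $T^-$ yielding $T$ are in bijection with the leaves $\iota_l$ of $T$ satisfying $T_{-l}=T^-$, the correspondence sending a grafting to the new leaf it creates. This identifies the reverse moves $(b,\chi)$ with the set of admissible $l$. For the forward moves I would argue that, given an admissible $l$, the duplicated leaf $k$ is forced: the split must be placed at the unique leaf of $T^-$ occupied by $\nu_{n-1}$ in $T'$, which is the image of a single leaf of $T$, necessarily distinct from $\iota_l$, and since the two leaves created by the split are symmetric there is no orientation ambiguity, so splitting there reproduces $T'$ exactly. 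Hence $(k,l)\mapsto l$ is a bijection onto the admissible $l$, and the forward and reverse counts coincide. The diagonal case $T=T'$ is handled by the same bijection applied to the non-trivial self-maps, after noting separately that the $n$ idle forward moves with $k=l$ match precisely the $n$ leaf-segment reverse moves, both contributing total weight $n/n^2$.

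The main obstacle I anticipate is the bookkeeping that makes the bijection exact rather than merely existent: verifying that the layer-$n$ deletion genuinely identifies the non-leaf segments of $T'$ with the segments of $T^-$, so that Procedure~\ref{proc:mrg} really does carry out a Procedure~\ref{proc:rg} grafting on $T^-$; checking that the factor $2$ in $\Pr(b)$ for non-leaf segments exactly cancels the orientation factor $\tfrac12$ while the leaf segments absorb both orientations into a single tree; and confirming that the forced duplication leaf $k$ always reconstructs $T'$ including its planar structure, which relies on the symmetry of the two newly created leaves to avoid a spurious left/right choice. The combinatorial content of the grafting--removal duality is essentially that of Lemma~\ref{lemma:pwr} specialized to the final ($n$-th) iteration, so I would lean on that lemma wherever it shortens the argument.
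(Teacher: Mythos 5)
Your proposal is correct and follows essentially the same route as the paper: both reduce the identity to a weight count of $1/n^2$ per move and a bijection between the nontrivial forward moves $(k,l)$, $k\neq l$, and the nontrivial reverse moves $(b,\chi)$ with $b$ not ending in a leaf, after matching the $n$ idle moves $k=l$ against the $n$ leaf-segment moves. The only organizational difference is that you factor both operations through the intermediate tree in $\mathcal{T}_{n-1}$ and lean on Lemma~\ref{lemma:pwr}, whereas the paper constructs the map $(k,l)\mapsto(b,\chi)$ directly via the merged segment and the orientation of $x_l$'s branch; the underlying combinatorics is identical.
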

\begin{proof}
Recall that $\Phi$ was dependent on the choice of $k,l$, which were both chosen uniformly. The probability of $k=l$ in one step of the \textit{EMG}, which for all $T\in\mathbb{T}_n$ entails $\Phi_{k,k}(T)=T$, is $\frac{1}{n}$. The probability that the branch segment $b$ chosen in a transition of the process $R$ is inside layer $n$, which always leads to $\Phi'_{b,\chi}(\tilde{T})=\tilde{T}$, is also $\frac{1}{n}$ in total for any $T$.\\
We define for arbitrary $T,T'\in\mathcal{T}_n,T\rightarrow T'$:
\begin{align*}
S_1&:=\{(k,l)\in\{1,\dots,n\}^2:\Phi_{k,l}(T)=T',k\neq l\}\\
S_2&:=\{(b,\chi)\in\{b_1,\dots,b_{\frac{n(n-1)}{2}}\}\times\{\textit{left,right}\}:\Phi'_{b,\chi}(T')=T\}
\end{align*}
If we can show that $|S_1|=|S_2|$, we are done. Let $(k,l)\in S_1$. Let $\nu$ denote the internal node deleted by $\Phi_{k,l}(T)$. Choosing the regrafting site $b$ as the branch segment generated by merging the two segments connected to $\nu$ (compare step 5 in Procedure~\ref{proc:emg}), and $\chi$ according to whether the branch of $x_l$ extends to the left or right in $T$, we obtain a unique $(b,\chi)\in S_2$, which yields a mapping $\mu:S_1\rightarrow S_2$. Since by definition of the Yule process there cannot be two or more tuples $k,l$ and $k',l$ with $k\neq k'$ such that $\Phi_{k,l}(T)=\Phi_{k',l}(T)$, $\mu$ is injective.\\ 
On the other hand, for any $(b,\chi)\in S_2$ such that $\Phi'_{b,\chi}(T')=T$, choosing $l$ such that $x_l$ is the leaf regrafted in $T'$ by $\Phi'_{b,\chi}(T')$ and $k$ such that $x_k$ is the leaf replacing the highest-labeled internal node in $T'$ by $\Phi'_{b,\chi}(T')$ (see step 5 of Procedure~\ref{proc:mrg}), we obtain $(k,l)\in S_1$ such that $\mu((k,l))=(b,\chi)$. Therefore, $\mu$ is a bijection and both sets are equally large.
\end{proof}
\begin{corollary}
The process $R$ represents the time-reversed process of the \textit{EMG}. 
\end{corollary}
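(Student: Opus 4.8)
The plan is to apply the definition of the time-reversed chain directly and reduce the claim to the immediately preceding Lemma, using the fact that the stationary distribution is uniform. Recall that for an ergodic Markov chain with transition matrix $E$ and stationary distribution $P^*$, the time-reversed chain $\hat{E}$ is characterized (in the sense of the cited reference) by the relation $P^*(T)\,\hat{E}(T,T')=P^*(T')\,E(T',T)$ for all states $T,T'\in\mathcal{T}_n$. My goal is to show that the transition matrix of $R$ equals $\hat{E}$.

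First I would invoke the fact, established earlier in the excerpt, that the stationary distribution $P^*$ of the \textit{EMG} is uniform on $\mathcal{T}_n$, i.e. $P^*(T)=\tfrac{1}{(n-1)!}$ for every $T$. Substituting this into the defining relation, the factors $P^*(T)$ and $P^*(T')$ cancel, leaving $\hat{E}(T,T')=E(T',T)$; in other words, the time-reversal of the \textit{EMG} is simply the transpose of its transition matrix. Next I would quote the preceding Lemma, which states $\Pr_{\textit{EMG}}(T_{j+1}=T'\mid T_j=T)=\Pr_R(\tilde{T}_{i+1}=T\mid \tilde{T}_i=T')$, that is, $E(T,T')=R(T',T)$. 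Swapping the roles of $T$ and $T'$ gives $E(T',T)=R(T,T')$, and combining this with the previous step yields $R(T,T')=\hat{E}(T,T')$ for all $T,T'$. Hence the transition matrix of $R$ coincides with that of the time-reversed \textit{EMG}, and since $R$ is initialized at the uniform (hence stationary) distribution matching the reversal convention, the two descriptions agree as stationary Markov chains.

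I do not expect any genuine obstacle: the corollary is an immediate consequence of the Lemma once the stationary law is known to be uniform. The only point deserving a word of care is internal consistency, namely that $R$ is a bona fide stochastic chain sharing the stationary distribution $P^*$. This holds because a uniform stationary distribution forces $E$ to be doubly stochastic, since $\sum_{T}P^*(T)\,E(T,T')=P^*(T')$ reduces to $\sum_{T}E(T,T')=1$; consequently its transpose $R$ has row sums equal to one and again admits $P^*$ as its stationary law, so the interpretation of $R$ as the reversed process is unambiguous.
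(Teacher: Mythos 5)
Your proposal is correct and follows essentially the same route as the paper: both cancel the uniform stationary distribution $P^*$ in the defining relation for the reversed chain and then identify the resulting transposed transition probabilities with those of $R$ via the preceding Lemma. The only minor addition is your explicit check that $E$ is doubly stochastic so that $R$ is a bona fide chain with the same stationary law, a point the paper leaves implicit.
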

\begin{proof}
The existence of a time-reversed process $R(\textit{EMG})$ on $\mathcal{T}_n$ is provided by the fact that it is a recurrent Markov chain with nonzero stationary distribution. The transition probabilities of this process are
\begin{align*}
\textnormal{Pr}_{R(\textit{EMG})}(T_{j+1}=T'| T_{j}=T)&=\textnormal{Pr}_{\textit{EMG}}(T_{i}=T'| T_{i+1}=T)\\
&=\textnormal{Pr}_{\textit{EMG}}(T_{i+1}=T| T_{i}=T')\frac{\textnormal{Pr}_{\textit{EMG}}(T_i=T')}{\textnormal{Pr}_{\textit{EMG}}(T_{i+1}=T)}\\
\end{align*}
Since the stationary distribution of the \textit{EMG} is the uniform distribution, we have $\textnormal{Pr}_{\textit{EMG}}(T_i=T')=\textnormal{Pr}_{\textit{EMG}}(T_{i+1}=T)$. Therefore, 
\begin{equation}
\textnormal{Pr}_{R(\textit{EMG})}(T_{j+1}=T'| T_{j}=T)=\textnormal{Pr}_{\textit{EMG}}(T_{i+1}=T| T_{i}=T')
\end{equation}
and these transition probabilities are exactly the ones provided by the process $R$ (compare equation~\eqref{eq:transprobabilities}).
\end{proof}
\vbox{
\begin{definition}
We call the process $R$ the \textit{Evolving Moran Genealogy backwards in time}, for short $\textit{EMG}^\flat$.
\end{definition}
}
\begin{figure}
\includegraphics[scale=.25]{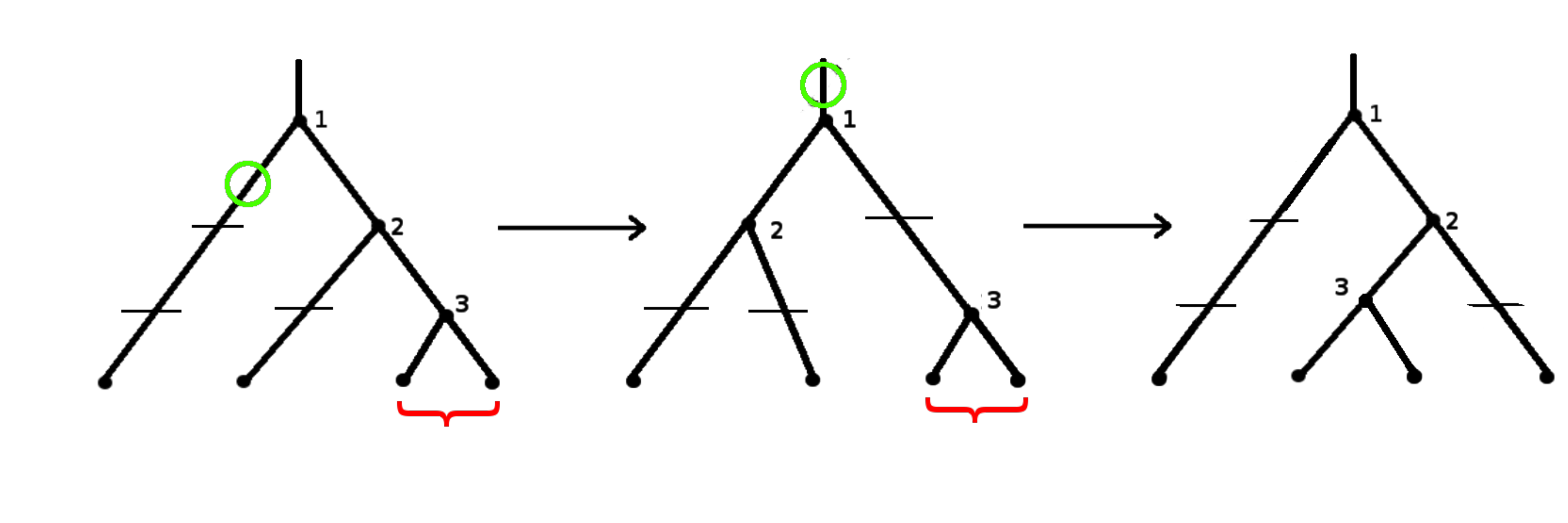}
\caption{Two possible transitions of the process $R$, i.e. the $\textit{EMG}^\flat$ of size $4$. Branch segment chosen for regrafting marked by "\textcolor{green}{o}".}
\label{fig:emgb}
\end{figure}
We end this section with the remark that the \textit{EMG} and $\textit{EMG}^\flat$ bear a certain resemblance to the \textit{Aldous chain} on cladograms \cite{aldous:cladograms}, of which also an infinite-size limiting process has been described \cite{lohr:aldouschain}. The state space of the Aldous chain consists of cladograms, which are constructed slightly differently than $\mathcal{T}_n$. In particular, they are not internally labelled and their branches are not subdivided into segments. Still, both \textit{EMG} and $\textit{EMG}^\flat$ feature one of the two components of the \textit{Aldous move}, i.e. removal (\textit{EMG}) and reattachment ($\textit{EMG}^\flat$) of a single leaf and its associated branch. In combining those two mechanisms, one can define a third Markov chain, which would represent a version of the Aldous chain on $\mathcal{T}_n$. A possibility of further research could be to compare the mixing times of the respective chains.  
\subsection{\textit{MRCA} and Age Structure in the $\textit{EMG}^\flat$}\label{sec:mrca}
Besides the technical aspects, there are some reasons why the $\textit{EMG}^\flat$ as a stochastic process can prove useful in theoretical and practical regard. While the transitions in the $\textit{EMG}$ rely on two random mechanics (duplication and removal), in the $\textit{EMG}^\flat$ they are unified within the regrafting operation. Because of that, aspects about the genealogy itself may become more tractable to analytic investigation. One good example for this is the \textit{MRCA}-process.\\
Let $x^*$ denote the \textit{MRCA} of a genealogy generated by a neutral Moran process. With probability $1$, after some finite time a descendant of $x^*$ will become ancestral to the entire population, establishing a new \textit{MRCA}. Therefore, defining $\chi_{\textit{MRCA}}(i)=1$ if at time $i$ a new \textit{MRCA} of the population is established (i.e., the \textit{MRCA} "jumps", which in the \textit{EMG} is represented by the eventual obliteration and repositioning of the root node of $T_i$), and $\chi_{\textit{MRCA}}(i)=0$ otherwise, we call $(\chi_{\textit{MRCA}}(i))_{i\geq 0}$ the \textit{MRCA}-process.\\
\begin{lemma}\label{lemma:mrca1}
$(\chi_{MRCA}(i))_{i\geq 0}$ is a geometric jump process of intensity $\frac{2}{n^2}$.
\end{lemma}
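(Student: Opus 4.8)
The plan is to establish the statement by working in the time-reversed chain $\textit{EMG}^\flat$, where the displacement of the root becomes transparent, and then to transport the conclusion back to the $\textit{EMG}$ by time reversal. The starting observation is that in a Merge-Regraft step (Procedure~\ref{proc:mrg}) the new leaf is inserted by the grafting operation of Procedure~\ref{proc:rg}, which creates a fresh internal node carrying the label $l(b)$ and raises the labels of all deeper internal nodes by one. Hence the node of label $1$, i.e. the current root and thus the current \textit{MRCA}, is displaced and superseded by a newly created node \emph{exactly} when $l(b)=1$, that is, when the chosen segment $b$ is the unique segment of layer~$1$ (the imaginary branch). Since the preliminary deletion of the $n$-th layer and of $\nu_{n-1}$ in step~5 of Procedure~\ref{proc:mrg} only affects the bottom of the tree and never the root, grafting at the imaginary branch is the sole root-altering event. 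The first step of the proof is to make this identification precise: in $\textit{EMG}^\flat$, $\chi_{\textit{MRCA}}=1$ at a step if and only if the imaginary branch is selected as the regrafting site.

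From there the jump probability is immediate. The imaginary branch does not end in a leaf, so by the selection rule of Procedure~\ref{proc:mrg} it is picked with probability exactly $\frac{2}{n^2}$, a value that does not depend on the current tree $\tilde{T}_i$, because every $T\in\mathcal{T}_n$ has precisely one layer-$1$ segment. As the pair $(b,\chi)$ is drawn afresh and independently at each step of $\textit{EMG}^\flat$, the indicators $\chi_{\textit{MRCA}}(i)$ therefore form an independent $\mathrm{Bernoulli}(\tfrac{2}{n^2})$ sequence, so the gaps between successive \textit{MRCA} jumps are independent and geometrically distributed with parameter $\frac{2}{n^2}$; this is exactly the asserted geometric jump process for $\textit{EMG}^\flat$.

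To conclude for the forward chain I would use that the \textit{EMG} is run from its uniform stationary distribution and that $\textit{EMG}^\flat$ is its exact time reversal: reversing an independent Bernoulli sequence again yields an independent Bernoulli sequence of the same parameter, so the law of $(\chi_{\textit{MRCA}}(i))$ is unchanged. To match the two descriptions transition by transition I would invoke the bijection $\mu$ from the preceding time-reversal argument: a forward step that obliterates the root deletes the internal node $\nu_1$, and $\mu$ sends such a step to the regrafting whose site is the segment obtained by merging the two segments incident to $\nu_1$---one of which is the imaginary branch. Thus $\mu$ pairs root obliterations in the \textit{EMG} one-to-one with imaginary-branch regraftings in $\textit{EMG}^\flat$, which completes the transfer.

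I expect the main difficulty to lie in the first step, namely in checking the label-shifting bookkeeping of Procedures~\ref{proc:rg} and~\ref{proc:mrg} carefully enough to be sure that grafting in layer~$1$---and nowhere else---creates a genuinely new root, and in confirming that the matching forward event is precisely the obliteration of $\nu_1$, which occurs only from the states $|T_i^l|=1$ or $|T_i^r|=1$ that mark the episode boundaries of the $\textit{TB}$ process. The reason to route the argument through $\textit{EMG}^\flat$ is that independence across steps is manifest there, whereas in the forward chain the conditional jump probability is $\frac{n-1}{n^2}$ on those boundary states and $0$ elsewhere, and only reproduces the constant rate $\frac{2}{n^2}$ after averaging against the uniform stationary law, using that $|T^l|$ is uniform on $\{1,\dots,n-1\}$ and hence $\Pr(|T^l|=1\text{ or }|T^r|=1)=\frac{2}{n-1}$.
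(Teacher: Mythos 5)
Your proposal is correct and follows essentially the same route as the paper's proof, which likewise identifies an \textit{MRCA} jump in the $\textit{EMG}^\flat$ with the event that the imaginary branch $b_1$ is chosen as the regrafting site, an event of probability $\frac{2}{n^2}$ independent of the current tree. The additional bookkeeping you supply --- the explicit transfer back to the forward chain via the bijection $\mu$ and the stationarity check $\frac{2}{n-1}\cdot\frac{n-1}{n^2}=\frac{2}{n^2}$ --- is left implicit in the paper but is consistent with its argument.
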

\begin{proof}
In the $\textit{EMG}^\flat$, the root of the genealogy $T_i$ changes if and only if the imaginary branch $b_1$ is chosen for regrafting. This happens with probability $\frac{2}{n^2}$ in each step (see also Figure \ref{fig:jumps}).
\end{proof}
This agrees with the results in \cite{pfaffelhuber:mrca}, where the \textit{MRCA}-process in the infinite-population limit is identified as a Poisson-process of intensity $1$, which is the limit of the geometric jump process as $n\rightarrow \infty$ with time sped up by $\frac{n^2}{2}$. Also by Lemma \ref{lemma:mrca1}, the number of steps needed to observe any number $r\in\mathbb{N}$ of root jumps follows a negative binomial distribution $\textit{NB}(r,\frac{2}{n^2})$.\\
The discrete structure of the $\textit{EMG}^\flat$ also allows us to derive properties of the \textit{MRCA}-process during ongoing  fixations in the underlying Moran process.\\
\vbox{
\begin{definition}
Suppose a member $\tilde{x}$ in a neutral Moran process was generated as the result of some duplication at time $i^*\gg 0$, and we observe, by chance, the fixation of descendants of $\tilde{x}$ in the population at time $i^{\textit{fix}}\gg 0$.\\
\begin{itemize}
\item
$i^{\textit{fix}}$ is called \textit{fixation time} of the individual $\tilde{x}$
\item $i^*$ is called \textit{birth time} of $\tilde{x}$
\end{itemize}
\end{definition}
}
\begin{lemma}\label{lemma:fixjumps}
In a Moran Population of size $n\geq 2$, we expect to observe $2-\frac{2}{n}$ \textit{MRCA}-jumps between (and including) $i^*$ and $i^{\textit{fix}}$.
\end{lemma}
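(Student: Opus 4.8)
The plan is to pass to the time-reversed chain $\textit{EMG}^\flat$, where, by Lemma~\ref{lemma:mrca1}, a root jump occurs at a given step if and only if the imaginary branch $b_1$ is selected for regrafting, an event of probability $\frac{2}{n^2}$ at \emph{every} step and \emph{independently of the current tree}. This state-independence is the essential point. In the forward \textit{EMG} the probability of an \textit{MRCA}-jump at a step depends on the current configuration (such a jump requires the root-subtree \emph{not} containing the clade of $\tilde{x}$ to be reduced to size $0$), so the jump events are correlated with the random length of the window $[i^*,i^{\textit{fix}}]$; that the count should nonetheless factor as (rate)$\times$(expected length) is not automatic — for $n=2$, for instance, every interior step of the window is a forced non-jump while the final, fixation step is a forced jump — and it is the reversed chain that supplies the constant conditional jump probability needed to prove it. Running $\textit{EMG}^\flat$ from the fixation configuration at $i^{\textit{fix}}$ back to the birth configuration at $i^*$, I would track $D$, the number of descendants of $\tilde{x}$, which starts at $n$ and decreases until $\tilde{x}$ is the last surviving member of its clade; the number of reversed steps in this window equals $i^{\textit{fix}}-i^*$.

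First I would establish the identity $\mathbb{E}[\#\text{jumps in }(i^*,i^{\textit{fix}}]]=\frac{2}{n^2}\,\mathbb{E}[\,i^{\textit{fix}}-i^*\,]$ by an optional-stopping (Wald-type) argument. Writing $J_s=\mathbf{1}[b=b_1]$ for the jump indicator at reversed step $s$ and $\mathcal{F}_{s-1}$ for the reversed history, one has $\mathbb{E}[J_s\mid\mathcal{F}_{s-1}]=\frac{2}{n^2}$ irrespective of $\mathcal{F}_{s-1}$, so $J_s-\frac{2}{n^2}$ is a martingale-difference sequence. Since the event that the window is still open at step $s$ is $\mathcal{F}_{s-1}$-measurable, $\sum_s\mathbb{E}[J_s\,\mathbf{1}[\text{window open at }s]]$ telescopes to $\frac{2}{n^2}\,\mathbb{E}[N]$ with $N=i^{\textit{fix}}-i^*$. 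The correlation between the jumps and $N$ is harmless here precisely because the \emph{conditional} jump probability is the state-free constant $\frac{2}{n^2}$.

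Next I would compute $\mathbb{E}[\,i^{\textit{fix}}-i^*\,]$, the expected birth-to-fixation time conditioned on fixation. By reversibility this equals the expected forward absorption time of the clade-size process, which — exactly as in the Proposition on the transition probabilities of $\textit{TB}$ — is a neutral two-allele Moran walk with $\Pr(d\to d\pm1)=\frac{d(n-d)}{n^2}$. Conditioning on fixation via the Doob $h$-transform with harmonic function $h_d=d/n$ produces the conditioned transitions $\frac{(d+1)(n-d)}{n^2}$ and $\frac{(d-1)(n-d)}{n^2}$ (so that state $1$ can no longer reach $0$), and solving the resulting second-order recurrence for the expected hitting time of $n$ from $1$ gives $n(n-1)$. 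Combining the two steps yields $\frac{2}{n^2}\cdot n(n-1)=2-\frac{2}{n}$.

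The main obstacle is the first identity, namely controlling the dependence between the jump events and the random window length: the clean product form only appears in the reversed chain, where the per-step jump probability is the state-independent constant $\frac{2}{n^2}$, so I expect most of the work to lie in making the martingale/optional-stopping step fully rigorous (integrability of $N$, and the fact that the telescoping uses only the one-step conditional probability, not independence). A secondary point requiring care is the endpoint bookkeeping: one must argue that the duplication step at $i^*$ itself is not counted, so that the relevant window is $(i^*,i^{\textit{fix}}]$ contributing exactly $i^{\textit{fix}}-i^*$ reversed steps — including it would introduce a spurious $\frac{2}{n^2}$ — and correspondingly that ``reaching the birth configuration'' is the correct stopping rule rather than the first reversed return of $D$ to $1$.
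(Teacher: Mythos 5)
Your route is genuinely different from the paper's, and its central step has a gap. The identity $\mathbb{E}[\#\text{jumps in }(i^*,i^{\textit{fix}}]]=\frac{2}{n^2}\,\mathbb{E}[i^{\textit{fix}}-i^*]$ is not delivered by the optional-stopping argument you sketch, because the event ``the window is still open at reversed step $s$'' is \emph{not} $\mathcal{F}_{s-1}$-measurable. The window closes at the reversed step that undoes the birth of $\tilde{x}$, i.e.\ the first step at which the internal node created at $i^*$ is the lowermost node \emph{and} the step is a genuine merge--regraft rather than a do-nothing step; whether the step is genuine depends on the branch segment drawn at step $s$ itself. Worse, the correlation goes the wrong way: on the event that the birth node is currently lowermost, every root-jump step is a genuine step and therefore closes the window, so $\mathbb{E}\bigl[J_s\,\mathbf{1}[s\le N]\mid\mathcal{F}_{s-1}\bigr]=0$ there, while $\frac{2}{n^2}\Pr[s\le N\mid\mathcal{F}_{s-1}]>0$. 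Your own test case $n=2$ already defeats the argument \emph{inside the reversed chain}: after the first (conditioned) reversed jump, the birth node of $\tilde{x}$ is the unique internal node, so every interior step of the reversed window is a forced non-jump --- exactly the pathology you diagnosed in the forward chain and claimed the reversal removes. Nor is this mere endpoint bookkeeping: replacing $N$ by the genuine stopping time $B=N+1$ (the step that removes the birth node) makes Wald legitimate and returns $1+\frac{2}{n^2}\mathbb{E}[B-1]=3-\frac{2}{n}$, and since the one extra step at $i^*$ is not almost surely a jump for $n\ge 3$, the gap to $2-\frac{2}{n}$ cannot be absorbed by it. That $\frac{2}{n^2}\cdot n(n-1)$ numerically equals $2-\frac{2}{n}$ is therefore, on the strength of your argument, an unproven coincidence; your Doob-transform computation $\mathbb{E}[i^{\textit{fix}}-i^*]=n(n-1)$ is correct but carries no weight until the product identity is established by other means.

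The paper's proof is structured to avoid constant-rate reasoning altogether. It isolates the one guaranteed jump at $i^{\textit{fix}}$, and then argues that along the reversed path a root jump can occur only at the steps where an individual present at time $i^*$ is revived; reading these revivals in order, the $j$-th one amounts --- by the sample-subtree invariance and piecewise-recovery results (Lemmas~\ref{lemma:ssiyt} and~\ref{lemma:pwr}) --- to a uniform random graft into a Yule tree of size $j$, which hits the imaginary branch (and hence moves the root) with probability $\frac{2}{j(j+1)}$. Summing, $1+\sum_{j=2}^{n-1}\frac{2}{j(j+1)}=2-\frac{2}{n}$. Note that the per-step jump probabilities inside the fixation window are $\frac{2}{j(j+1)}$, not the stationary $\frac{2}{n^2}$: the conditioning on an ongoing fixation really does distort the jump rate, which is precisely why the expected count does not factor as (stationary rate)\,$\times$\,(expected window length) in any way your martingale argument can certify. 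To rescue your approach you would have to prove the product identity directly; the paper's decomposition into a deterministic number of revival steps with explicitly computable jump probabilities is the natural substitute for it.
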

\begin{proof}
By our assumptions, we know that one \textit{MRCA}-jump necessarily happens at the transition of $T_{i^{\textit{fix}}-1}$ to $T_{i^{\textit{fix}}}$. We claim that we expect another one during the remainder of the fixation time.\\
Let $l=i^{\textit{fix}}-1-i^*$. We know $l\geq n-2$, since the minimal number of steps necessary to fix the descendants of $\tilde{x}$ is $n-1$. The sequence of genealogies $(T_{i^*},\dots,T_{i^{\textit{fix}}-1})$ in reverse order is a path $y=(T_0',\dots,T_l')$ of the $\textit{EMG}^\flat$, where $T_{0}'=T_{i^{\textit{fix}}-1},\dots,T_l'=T_{i^*}$.\\ 
The set of $\textit{EMG}^\flat$-time steps $\{1,\dots,l\}$ contains a subset $I=\{i_1,\dots,i_{n-1}\}$, $i_l\leq i_{l+1}$, where $i\in I$ if and only if $x\in T_l'$ holds for the individual $x$ regrafted at time $i$; i.e., $x$ is also present in the population at the time $l$, which represents the birth time of $\tilde{x}$ in the $\textit{EMG}^\flat$. $I$ thus consists of exactly the times where individuals of the population at the birth time of $\tilde{x}$ are revived. In particular, $i_1=0$ and $i_{n-1}=l$. For $i\in I$, let $S_i:=\{x \in T'_i:x\in T'_l\}$ denote the set of individuals that will be members of the population at the birth time of $\tilde{x}$. \\
Starting from $T_{1}'$, a root jump can only occur in some step $i$ if $i\in I$. For any $i_j\in I$, we know that regrafting must take place in some layer $k\leq j+1$. We therefore consider the sequence 
$$\hat{T}^{(1)} = (T_l')_{S_{i_1}},\dots,\hat{T}^{(n-2)}=(T_l')_{S_{i_{n-2}}},\hat{T}^{(n-1)}=T_l'$$
of $S_i$-induced subtrees of $T_l'$ for $i\in I$. Since $T_{l}'$ is a Yule tree, by Lemma~\ref{lemma:pwr} we may assume that each $\hat{T}^{(j)}$, $j=2,\dots,n-1$ is obtained from a random grafting operation~\ref{proc:rg} performed on $\hat{T}^{(j-1)}$. The probability of a root jump in step $i_j$ is therefore the probability of regrafting at the imaginary branch of $\hat{T}^{(j)}$, which equals $\frac{2}{j(j+1)}$.\\
The total expected number of root jumps along the $\textit{EMG}^\flat$-path $y$ is then
$$\sum_{k=2}^{n-1}\frac{2}{k(k+1)}=\frac{n-2}{n}.$$
This expression equals $1-\frac{2}{n}$. Adding the root jump that necessarily occurs in step $1$, we end up with an expectation of $2-\frac{2}{n}$.

\end{proof}
\begin{figure}

\includegraphics[scale=.25]{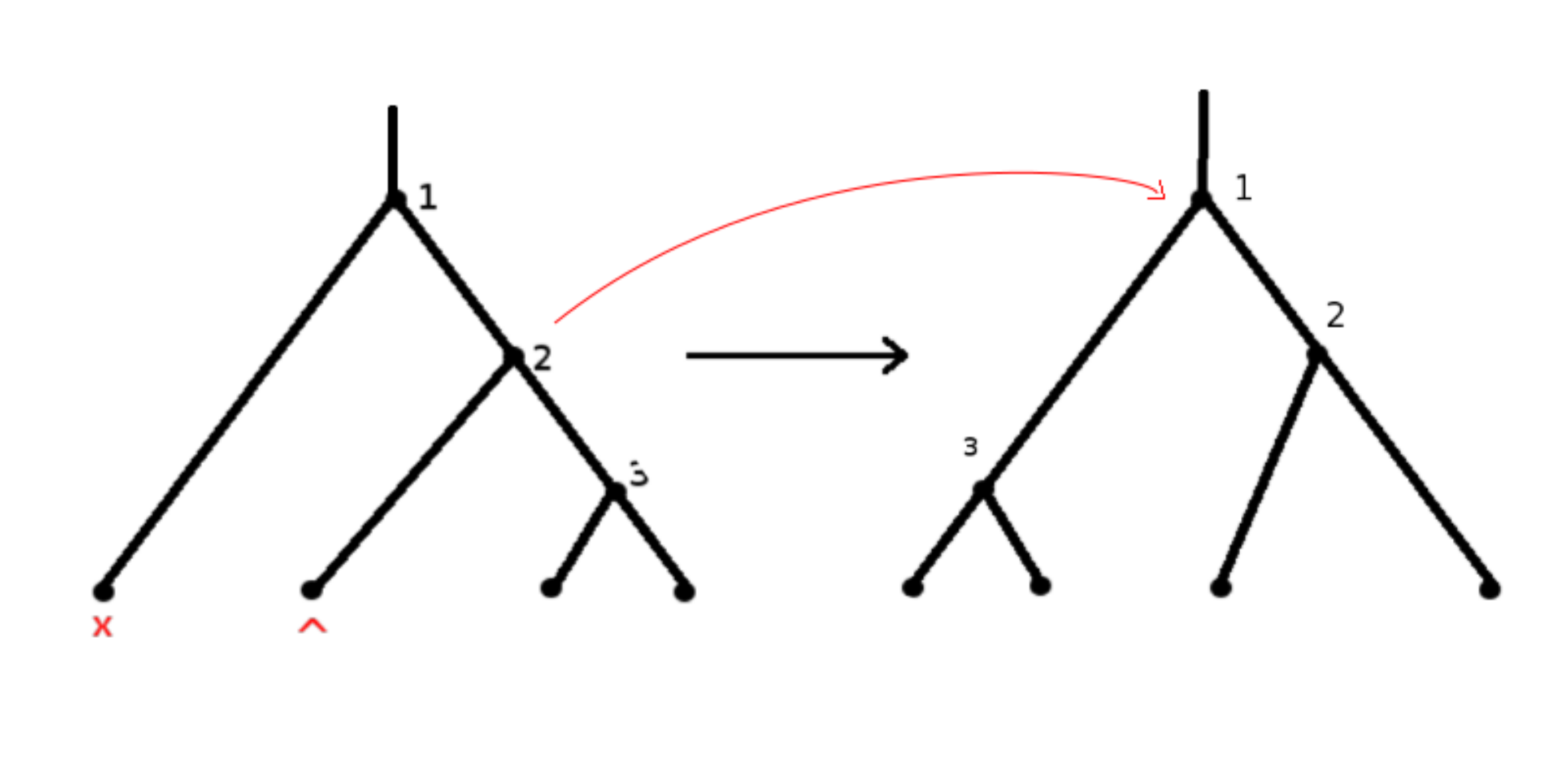}
\includegraphics[scale=.25]{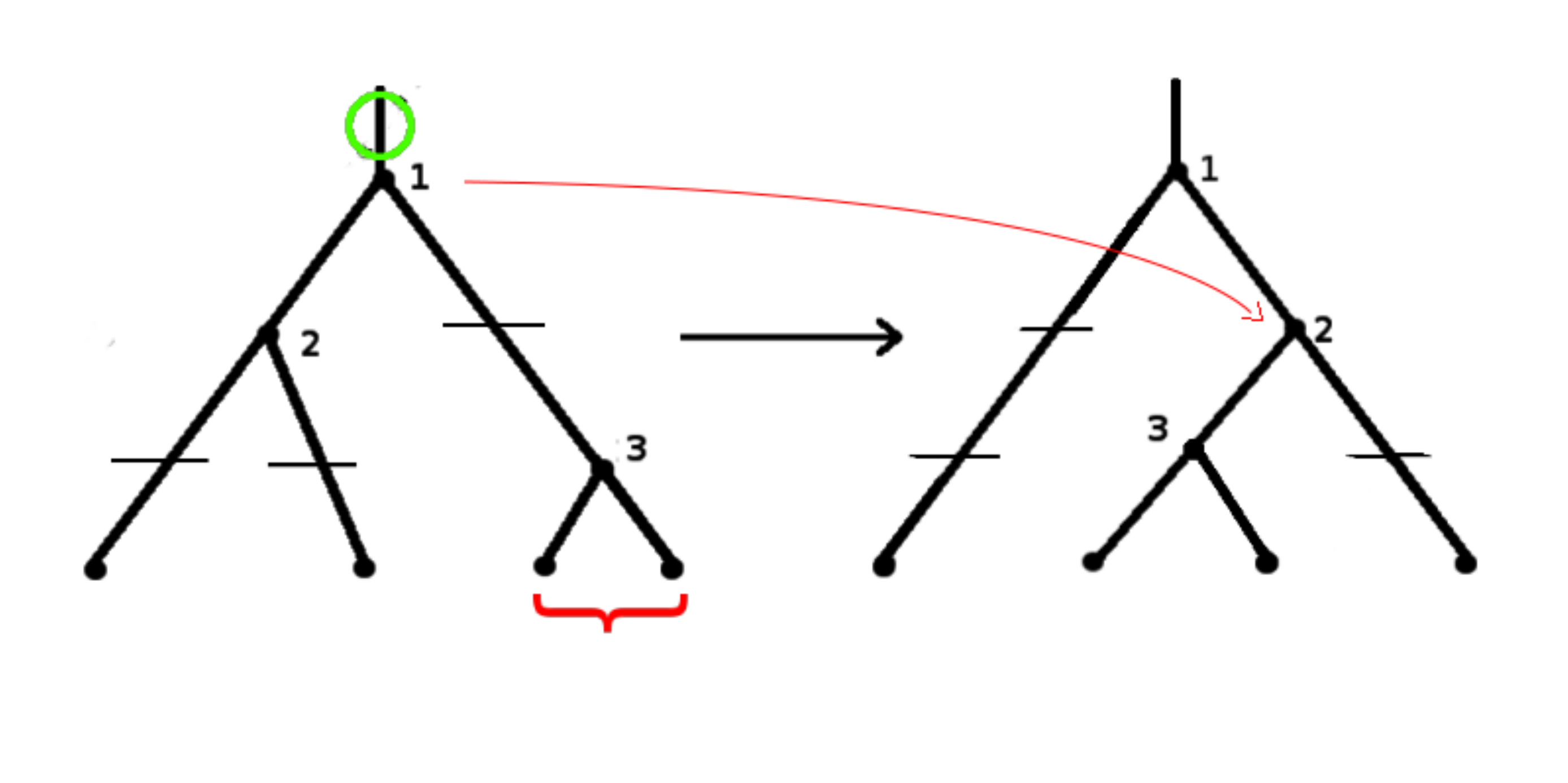}
\caption{\textit{MRCA} jumps in the \textit{EMG} and $\textit{EMG}^\flat$ (See also Figures~\ref{fig:emg2step}, \ref{fig:emgb}).}
\label{fig:jumps}
\end{figure}

Considering the infinite-population limit, we conclude that between birth and fixation time of an individual, there are $2$ expected \textit{MRCA} jumps in total. It is noteworthy that in the backward view, we do not need to condition the process on fixation of the allele $\tilde{x}$, since every root jump necessarily coincides with a fixation. In the limiting process, one possible solution to this problem is to consider a Wright-Fisher diffusion conditioned on not hitting $0$, which is achieved by introducing an artificial drift term (see \cite{delmas:families}).\\
By similar arguments, we may calculate the exact distribution of root jumps during a neutral fixation for any $n$, and show that these distributions converge as $n\rightarrow \infty$. For $n\geq 2$, let $\Pr_n(k)$ denote the probability of observing $k$ root jumps during a neutral fixation in an $\textit{EMG}$ of size $n$, and $\Pr_\infty(k)$ the same probability in the infinite-population limit. $\Pr_n(k)$ can be written as follows:
$${\Pr}_n(k):=\sum_{2\leq i_1,\dots,i_{k-1}\leq n-1}\Pi_1^k\frac{2}{i_k(i_k+1)}\Pi_{j\neq i_1,\dots,i_{k-1}}\left(1-\frac{2}{j(j+1)}\right)$$
This is obtained by multiplying the probabilities of regrafting at the imaginary branches of $\hat{T}^{(i_1)},\dots,\hat{T}^{(i_k-1)}$ (in the sense of the notation used in Lemma~\ref{lemma:fixjumps}) and not regrafting at the imaginary branches of all other $\hat{T}^{(j)}$, summed up over all possible choices of $i_1,\dots,i_{k-1}$. For $k=1$, in which case the imaginary branch is never chosen for regrafting, we have simply:
\begin{equation}\label{eq:p1}{\Pr}_n(1)=\Pi_{j=2}^{n-1}\left(1-\frac{2}{j(j+1)}\right)=\Pi_{j=2}^{n-1}\frac{(j+2)(j-1)}{j(j+1)}=\frac{n+1}{3(n-1)}\end{equation}
By reordering of the factors, we obtain the following expressions for $k=2,3,\dots$:
\begin{align}\label{eq:jumpdist}
{\Pr}_n(2)&=\Pi_{j=2}^{n-1}\left(1-\frac{2}{j(j+1)}\right)\left[\sum_{k=2}^{n-1}\frac{2}{k(k+1)-2}\right]\notag\\
{\Pr}_n(3)&=\Pi_{j=2}^{n-1}\left(1-\frac{2}{j(j+1)}\right)\left[\sum_{k=2}^{n-2}\frac{2}{k(k+1)-2}\cdot\left(\sum_{l=k+1}^{n-1}\frac{2}{l(l+1)-2}\right)\right]\notag\\
\dots & \notag\\
\end{align}
For small $k$, it is possible to also obtain closed-form expressions for $\Pr_n(k)$ similar to equation~(\ref{eq:p1}) using computational algebra. For $n=2$, $\Pr_n(1)=1$, and as $n\rightarrow \infty$, $\Pr_n(1)$ converges to $\frac{1}{3}=:\Pr_\infty(1)$, decreasing monotonously. Note that this can be interpreted as an analogon to a result in \cite{pfaffelhuber:mrca} about the infinite-population limit. In the terminology of this work, the value $\frac{1}{3}$ corresponds to the probability that the "next fixation curve has not yet started" at the time $i^*$.\\
The other probabilities in the infinite-population limit can be calculated numerically by evaluating the infinite-sum expressions on the right-hand sides of~\eqref{eq:jumpdist}. By continuity, the probabilities $\sum_{k=1}^{\infty}\Pr_\infty(k)$ sum up to $1$. The largest contribution comes from $\Pr_\infty(2)=\frac{11}{27}=\frac{11}{9}\cdot\frac{1}{3}$. As a side note, since $\Pr_n(2)$ increases monotonously with $n$, we can calculate that for $n\leq 9$, the distribution is dominated by $\Pr_n(1)$, whereas for $n\geq 10$, the probability $\Pr_n(2)$ provides the largest value. Figure~\ref{fig:jumpdist} outlines some of the distributions for different population sizes.\\
\begin{figure}
\includegraphics[scale=.625]{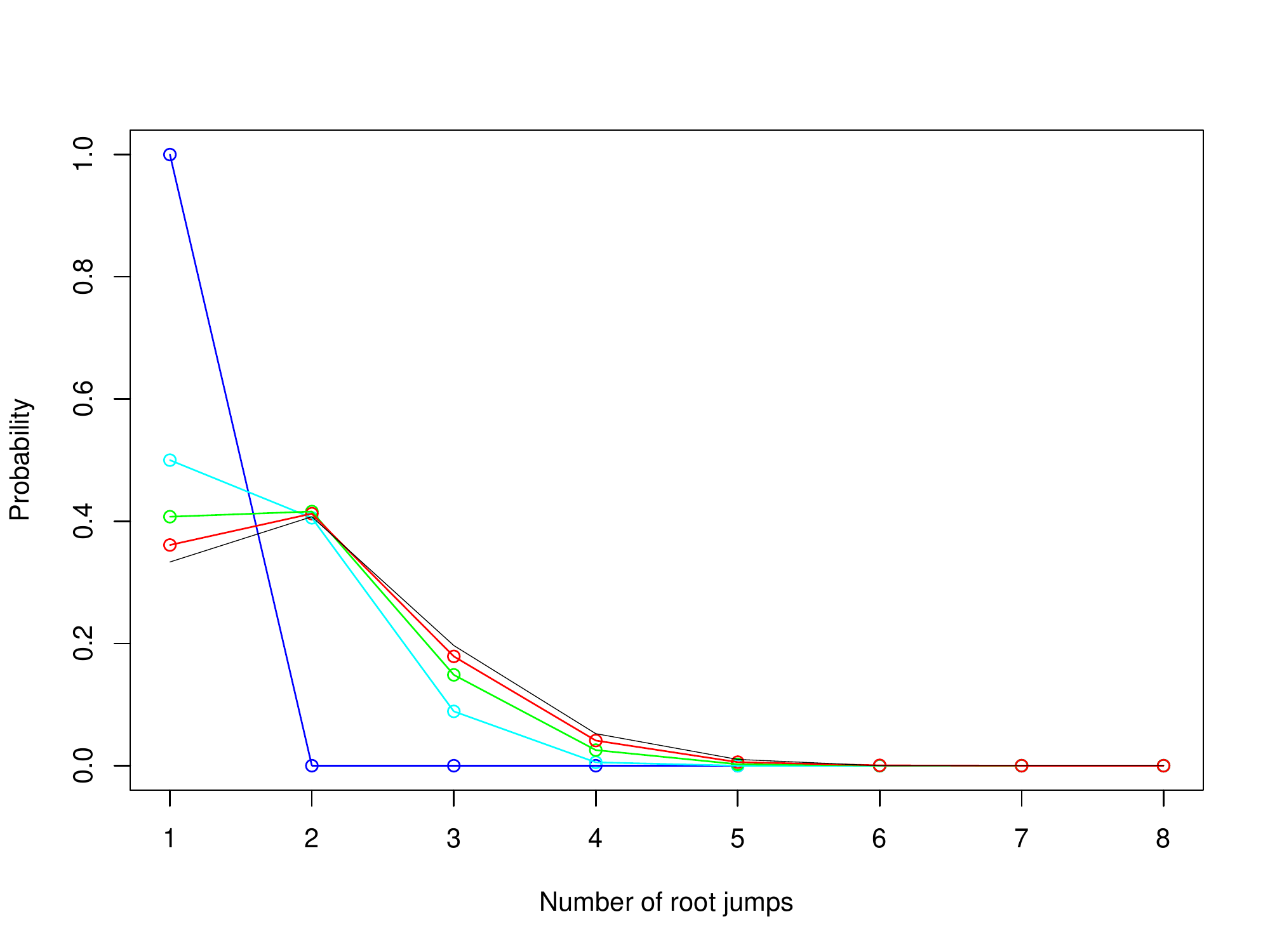}
\caption{The distributions of $P_n(k)$, $k=1,\dots,8$, $n=2$ (blue), $5$ (turquoise), $10$ (green), $25$ (red) and $\infty$ (black).}
\label{fig:jumpdist}
\end{figure}
Another implication of the $\textit{EMG}^\flat$ is that coalescent events are "visible" in the genealogy for a certain average number of steps, such that we can determine their age structure. In general, the time until the internal node labeled $k$ is moved down by one layer is geometrically distributed with parameter $\frac{k(k+1)}{n^2}$, because $\frac{k(k+1)}{2}$ branches exist above this internal node. In the case of the root node, this expectation is $\frac{2}{n^2}$, as stated before.\\
The time until the current root node of $T_i$ vanishes under the $\textit{EMG}^\flat$ is therefore distributed as the sum of $n-1$ independent random variables $\rho_1,\dots,\rho_{n-1}$, where $\rho_k$ is geometrically distributed with parameter $\frac{k(k+1)}{n^2}$. Its expectation is the sum of the expectations of the $\rho_k$, i.e. $\sum_{k=1}^{n-1}\frac{n^2}{k(k+1)}=n^2\left(1-\frac{1}{n}\right)$. In the large-population limit, this corresponds to a rate of $2$.\\ 
The expected time until an internal node of $T_i$ becomes invisible, averaged over all nodes, is
\begin{eqnarray}
\frac{1}{n-1}\sum_{k=1}^n n^2 \sum_{j=k}^{n-1} \frac{1}{j(j+1)}&=&\frac{n^2}{n-1}\sum_{k=1}^{n-1}\left( \frac{1}{k}-\frac{1}{n-1}\right)\nonumber\\
&=&\frac{n^2}{n-1}a_{n-1}-\frac{n^2}{(n-1)^2}\nonumber\\
&\approx & n\log(n)\nonumber
\end{eqnarray}

Rescaling time, we obtain $\frac{2\log(n)}{n}\rightarrow 0$ as $n\rightarrow\infty$. We conclude that in a large evolving tree, most internal nodes (which correspond to coalescent events) only exist for a short time until they are removed by the dynamics.
\section{Discussion}
The \textit{Evolving Moran Genealogy} and time-reversed version reveal interesting properties of the genealogies generated by the neutral Moran process. We have used them to re-formulate classic diffusion-limit results on the \textit{MRCA} process, but also obtain exact expressions for the finite-population setting. Additionally, the distribution of \textit{MRCA} jumps during fixation periods becomes tractable in the $\textit{EMG}^\flat$, for both finite and limiting case. Of practical interest may be the fact that the $\textit{EMG}^\flat$ reduces the number of operations from $2$ to $1$ in contrast to the underlying Moran Model, if we think of the regrafting operation as one single operation.\\ 
It might prove insightful to extend this to other Moran-type population models, such as ones involving alleles with a selective advantage. In such models, the associated tree-valued processes need to be described and may not admit such simple definitions as those we find in the neutral case. Still, such constructions might enable a similar kind of analysis that we have performed here. In many settings, tree-valued dual processes have been described for the infinite-population limits, e.g. the ancestral selection graph \cite{krone:asg}, which involves both mutation and selection between and among types, and admits a graphical representation similar to the lookdown-construction under neutrality \cite{lenz:lookdownasg}, making many implicit features of the model accessible (see e.g. \cite{baake:mutseleceq}). Tree-valued constructions of finite time and population size may help here to bridge the gap between finite and infinite population case likewise.

\section*{Acknowledgments}
We would like to thank Anton Wakolbinger and Peter Pfaffelhuber for helpful suggestions, Jan Rolfes and Anna-Lena Tychsen for interesting discussions and two anonymous reviewers for their comments.\\
This work was financially supported by the German Research Foundation (DFG-SPP1590).\\

\bibliography{references}
\bibliographystyle{mystylefile}

\end{document}